\documentclass[11pt]{article}

\usepackage[ruled]{algorithm}
\usepackage{algpseudocode}
\usepackage{graphicx}
\usepackage{amssymb}
\usepackage[isolatin]{inputenc}
\usepackage[OT1]{fontenc}   
\usepackage{dsfont}
\usepackage{epsfig}
\usepackage{wide}
\usepackage{url}
\usepackage{ifthen}
\usepackage{latexsym}
\usepackage{array}
\usepackage{wrapfig}
\usepackage{float}
\usepackage{setspace}
\usepackage{times}
\usepackage{float}
\floatstyle{ruled}
\usepackage{boxedminipage}
\usepackage{graphicx, subfigure}
\usepackage{epsfig}
\usepackage{ntheorem}
\usepackage{latexsym}
\usepackage{amssymb}

\alglanguage{pseudocode}

\newfloat{algorithm}{tbp}{loa}
\floatname{algorithm}{Algorithm}

\newtheorem{theorem}{Theorem}
\newtheorem{lemma}{Lemma}
\newtheorem{remark}{Remark}

\newcommand{\ie}{\emph{i.e., }}

\newcommand{\N}{\mathds{N}}

\newcommand{\BEGLIST}{\begin{list}{}{\partopsep -3pt \parsep -2pt}}
\newcommand{\ENDLIST}{\end{list}}

\newenvironment{proof}{{\bf Proof. } }{{\hfill $\Box$}\vspace{.5pc}}

\newenvironment{theo}[1]{
\begin{theorem}#1}{\end{theorem}
}

\newenvironment{lem}[1]{
\begin{lemma}#1}{\end{lemma}
}

\newenvironment{rem}[1]{
\begin{remark}#1}{\end{remark}
}

\newenvironment{keyw}{{\noindent \bf Keywords:}}{\medskip}

\setlength\belowcaptionskip{0.200ex}
\setlength\abovecaptionskip{0.050ex}
\setlength\intextsep{2ex}
\setlength\textfloatsep{2ex}
\setlength\floatsep{0.05ex}

\date{}

\title{Optimal Grid Exploration by Asynchronous Oblivious Robots\\
  {
  {
  }
  }
  }
\author{
  St\'ephane Devismes\thanks{VERIMAG UMR 5104, Universit\'e Joseph Fourier, Grenoble (France)}
  \and
  Anissa Lamani\thanks{MIS, Universit\'e de Picardie Jules Verne (France)}
  \and
  Franck Petit\thanks{LIP6, UPMC Sorbonne Universit\'es (France)}
  \and
  Pascal Raymond$^*$ 
  \and
  S\'ebastien Tixeuil$^\ddag$ 
}

\begin{document}

\maketitle

\begin{abstract}
  We propose optimal ({\em w.r.t.} the number of robots) solutions for
  the {\em deterministic terminating exploration} ({\em exploration}
  for short) of a grid-shaped network by a team of $k$ asynchronous
  oblivious robots in the asynchronous non-atomic model, so-called
  CORDA.

  In more details, we first consider the ATOM model.  We show that it
  is impossible to explore a grid of at least three nodes with less
  than three robots. Next, we show that it is impossible to explore a
  $(2,2)$-Grid with less than 4 robots, and a $(3,3)$-Grid with less
  than 5 robots, respectively.  The two first results hold for both
  deterministic and probabilistic settings, while the latter holds
  only in the deterministic case.  ATOM being strictly stronger than
  CORDA, all these impossibility results also hold in CORDA.

  Then, we propose deterministic algorithms
  in CORDA to exhibit the optimal number of robots allowing to explore
  of a given grid.  Our results show that except in two particular
  cases, 3 robots are necessary and sufficient to deterministically
  explore a grid of at least three nodes. The optimal number of robots
  for the two remaining cases is: 4 for the $(2,2)$-Grid and 5 for the
  $(3,3)$-Grid, respectively.
\end{abstract}

\begin{keyw}
Exploration, Grid, Oblivious Robots, CORDA model.
\end{keyw}



\section{Introduction}

We consider autonomous robots~\cite{PRT11c,SY99j} that are endowed with motion actuators
and visibility sensors, but that are otherwise unable to communicate.
Those robots must collaborate to solve a collective task, here the
\emph{deterministic terminating grid exploration} ({\em exploration}
for short), despite being limited with respect to input from the
environment, asymmetry, memory, etc.

So far, two universes have been studied: the {\em continuous
  two-dimensional Euclidean space} and the {\em discrete universe}.
In the former, robot entities freely move on a plane using visual
sensors with perfect accuracy that permit to locate all other robots
with infinite precision (see \emph{e.g.},
\cite{DFSY10c,DLP08j,FPSW08j,SDY09j,SY99j}).  
In the latter, the space
is partitioned into a finite number of locations, conventionally
represented by a graph, where the nodes represent the possible
locations that a robot can take and the edges the possibility for a
robot to move from one location to another (\emph{e.g.},
\cite{BBMR08c,BMPT10c,BMPT11c,Devismes2010,DPT09c,FIPS07c,FIPS08c,IIKO10c,KLOT11c,KMP08j,LPT10c}).

In this paper, we pursue research in the discrete universe and focus
on the \emph{exploration problem} when the network is an anonymous
unoriented grid, using a team of autonomous mobile robots.
Exploration requires that robots explore the grid and stop when the
task is complete. In other words, every node of the grid must be
visited by at least one robot and the protocol eventually terminates,
{\em i.e.}, every robot eventually stays idle forever.

The robots we consider are unable to communicate, however they can
sense their environment and take decisions according to their local
view. We assume anonymous and uniform robots (\textit{i.e.}, they
execute the same protocol and there is no way to distinguish between
them using their appearance). In addition, they are oblivious,
\textit{i.e.}, they do not remember their past actions.  In this
context, robots asynchronously operate in cycles of three phases:
Look, Compute, and Move. In the first phase, robots observe their
environment in order to get the position of all other robots in the
grid. In the second phase, they perform a local computation using the
previously obtained view and decide their target destination to which
they move during the last phase.

The fact that the robots have to stop after the exploration process
implies that the robots somehow have to remember which part of the
graph has been explored. Nevertheless, under this weak scenario,
robots have no memory and thus are unable to remember the various
steps taken before. In addition, they are unable to communicate
explicitly. Therefore the positions of the other robots are the only
way to distinguish the different stages of the exploration
process. The main complexity measure is then the minimal number of
required robots.  Since numerous symmetric configurations induce a
large number of required robots, minimizing the number of robots turns
out to be a difficult problem. As a matter of fact, in~\cite{FIPS08c},
it is shown that, in general, $\Omega(n)$ robots are necessary to
explore a tree network of $n$ nodes deterministically.

\paragraph{Related Work.}
In~\cite{FIPS07c}, authors proved that no deterministic exploration is
possible on a ring when the number of robots $k$ divides the number of
nodes $n$.  In the same paper, the authors proposed a deterministic
algorithm that solves the problem using at least $17$ robots provided
that $n$ and $k$ are co-prime.
In~\cite{LPT10c}, Lamani \emph{et al.} proved that there exists no
deterministic protocol that can explore an even sized ring with $k\leq
4$ robots, even in the atomic model, so-called ATOM~\cite{SY99j}. In
this model, robots execute their Look, Compute and Move phases in an
atomic manner, \ie every robot that is activated at instant $t$
instantaneously executes a full cycle between $t$ and
$t+1$. Impossibility results in ATOM naturally extend in the
asynchronous non-atomic model, so-called CORDA~\cite{P01cb}.  Lamani
\emph{et al.}  also provide in~\cite{LPT10c} a deterministic protocol
using five robots and performing in CORDA, provided that five and $n$
are co-prime.  By contrast, four robots are necessary and sufficient
to solve the {\em probabilistic} exploration of any ring of size at
least 4 in ATOM~\cite{DPT09c,Devismes2010}.

To our knowledge, grid-shaped networks were only considered in the
context of anonymous and oblivious robot exploration~\cite{BBMR08c,BMPT11c}
for a variant of the exploration problem where robots perpetually
explore all nodes in the grid (instead of stopping after exploring the
whole network).  Also, contrary to this paper, the protocols presented
in~\cite{BBMR08c} make use of a common sense of direction for all
robots (common north, south, east, and west directions) and assume an
essentially synchronous scheduling.

\paragraph{Contribution.}
In this paper, we propose optimal ({\em w.r.t.} the number of robots)
solutions for the deterministic terminating exploration of a
grid-shaped network by a team of $k$ asynchronous oblivious robots in
the asynchronous and non-atomic CORDA model.

In more details, we first consider the ATOM model, which is a strictly
stronger model than CORDA.  We show that it is impossible to explore a
grid of at least three nodes with less than three robots. Next, we
show that it is impossible to explore a $(2,2)$-Grid with less than 4
robots, and a $(3,3)$-Grid with less than 5 robots, respectively.  The
two first results hold for both deterministic and probabilistic
settings, while the latter holds only in the deterministic case.  Note
also that these impossibility results naturally extend to CORDA.

Then, we propose several deterministic algorithms in CORDA to exhibit
the optimal number of robots allowing to explore of a given grid.  Our
results show that except in two particular cases, 3 robots are
necessary and sufficient to deterministically explore a grid of at
least three nodes. The optimal number of robots for the two remaining
cases is: 4 for the $(2,2)$-Grid and 5 for the $(3,3)$-Grid,
respectively.

The above results show that, perhaps surprisingly, exploring a grid is
easier than exploring a ring. In the ring, deterministic solutions
essentially require five robots~\cite{LPT10c} while probabilities
enable solutions with only four robots~\cite{DPT09c,Devismes2010}. In
the grid, three robots are necessary and sufficient in the general
case even for deterministic protocols, while particular instances of
the grid do require four or five robots. Also, deterministically
exploring a grid requires no primality condition while
deterministically exploring a ring expects the number $k$ of robots to
be co-prime with $n$, the number of nodes.

\paragraph{Roadmap.}

Section~\ref{sect:model} presents the system model and the problem to
be solved.  Lower bounds are shown in Section~\ref{Bounds}.  The
deterministic general solution using three robots is given
in~Section~\ref{sec:3R}, the special case with five robots is studied
in~Section~\ref{sec:5R}.  Section~\ref{conclu} gives some concluding
remarks. 

\section{Preliminaries}\label{sect:model}

\paragraph{Distributed Systems.}  
We consider systems of autonomous mobile entities called {\em agents}
or {\em robots} evolving in a {\em simple unoriented connected graph}
$G = (V,E)$, where $V$ is a finite set of nodes and $E$ a finite set
of edges.  In $G$, nodes represent locations that can be sensed by
robots and edges represent the possibility for a robot to move from
one location to another.  We assume that $G$ is an $(i,j)$-{\em Grid}
(or a Grid, for short) where $i$, $j$ are two positive integers, {\em
  i.e.}, $G$ satisfies the following two conditions:
\begin{enumerate}
\item $|V| = i \times j$, and
\item there exists an order on the nodes of $V$, $v_1,
  \ldots, v_{i\cdot j}$, such that:
  \begin{itemize}
  \item $\forall x \in [1..i\times j]$, $(x \bmod i) \neq 0
    \Rightarrow \{v_x,v_{x+1}\} \in E$, and
  \item $\forall y \in [1..i\times (j-1)]$, $\{v_y,v_{y+i}\} \in
    E$.
  \end{itemize}
\end{enumerate}

We denote by $n= i\times j$ the number of nodes in $G$. We denote by
$\delta(v)$ the degree of node $v$ in $G$. Nodes of the grid are
anonymous (we may use indices, but for notation purposes
only). Moreover, given two neighboring nodes $u$ and $v$, there is no
explicit or implicit labeling allowing the robots to determine whether
$u$ is either on the left, on the right, above, or below $v$. Remark
that an $(i,j)$-{\em Grid} and a $(j,i)$-{\em Grid} are
isomorphic. Hence, as the nodes are anonymous, we cannot distinguish
an $(i,j)$-{\em Grid} from a $(j,i)$-{\em Grid}. So, without loss of
generality, we always consider $(i,j)$-{\em Grids}, where $i\leq
j$. Note also that any $(1,j)$-{\em Grid} is isomorphic to a chain. In
any $(i,j)$-{\em Grid}, if $i = 1$, then either the grid consists of
one single node, or two nodes are of degree 1 and all other nodes are
of degree 2; otherwise, when $i>1$, four nodes are of degree 2 and all
other nodes are of degree either 3 or 4.  In any grid, the nodes of
smallest degree are called {\em corners}. In any $(1,j)$-{\em Grid}
with $j>1$, the unique chain linking the two corners is called the
{\em borderline}. In any $(i,j)$-{\em Grid} such that $i>1$, there
exist four chains $v_1$, \ldots, $v_m$ of length at least 2 such that
$\delta(v_1) = \delta(v_m) = 2$, and $\forall x, 1 < x < m$,
$\delta(v_x) = 3$, these chains are also called the {\em borderlines}.

\paragraph{Robots.} 
Operating on $G$ are $k \leq n$ robots. The robots
do not communicate in an explicit way; however they see the position
of the other robots and can acquire knowledge based on this information.
We assume that the robots cannot remember any previous observation nor
computation performed in any previous step.  Such robots are said to
be \emph{oblivious} (or \emph{memoryless}).

Each robot operates according to its (local) {\em program}.  We call
\emph{protocol} a collection of $k$ \emph{programs}, each one
operating on one single robot. Here we assume that robots are
\emph{uniform} and \emph{anonymous}, {\em i.e.}, they all have the
same program using no local parameter (such as an identity) that could
permit to differentiate them.
The program of a robot consists in executing {\em
  Look\mbox{-}Compute\mbox{-}Move cycles} infinitely many times.  That
is, the robot first observes its environment (Look phase).  Based on
its observation, a robot then decides to move or stay idle (Compute
phase).  When a robot decides to move, it moves from its current node
to a neighboring node during the Move phase.

\paragraph{Computational Model.}

We consider two models: the semi-asynchronous and atomic model,
ATOM~\cite{EP09j,SY99j} and the asynchronous non-atomic model,
CORDA~\cite{FIPS07c,P01cb}.  In both models, time is represented by an
infinite sequence of instants 0, 1, 2, \dots\ No robot has access to
this global time. Moreover, every robot executes cycles infinitely
many times. Each robot performs its own cycles in sequence. However,
the time between two cycles of the same robot and the interleavings
between cycles of different robots are decided by an {\em
  adversary}. As a matter of facts, we are interested in algorithms
that correctly operate despite the choices of the adversary. In
particular, our algorithms should also work even if the adversary
forces the execution to be fully sequential or fully synchronous.

In ATOM, each Look-Compute-Move cycle execution is assumed to be
\emph{atomic}: every robot that is activated (by the adversary) at
instant $t$ instantaneously executes a full cycle between $t$ and
$t+1$.

In CORDA, Look-Compute-Move cycles are performed asynchronously by
each robot: the time between Look, Compute, and Move operations is
finite yet unbounded, and is decided by the adversary. The only
constraint is that both Move and Look are instantaneous.

Remark that in both models, any robot performing a Look operation sees
all other robots on nodes and not on edges. However, in the CORDA, a
robot $\mathcal R$ may perform a Look operation at some time $t$,
perceiving robots at some nodes, then Compute a target neighbor at
some time $t'>t$, and Move to this neighbor at some later time
$t''>t'$ in which some robots are at different nodes from those
previously perceived by $\mathcal R$ because in the meantime they
moved. Hence, robots may move based on significantly outdated
perceptions.

Of course, ATOM is stronger than CORDA. So, to be
as general as possible, in this paper, our impossibility results are
written assuming ATOM, while our algorithms assume CORDA.

\paragraph{Multiplicity.}
We assume that during the Look phase, every robot can perceive whether
several robots are located on the same node or not.  This ability is
called \emph{Multiplicity Detection}.  We shall indicate by $d_i(t)$
the multiplicity of robots present in node $u_i$ at instant $t$.  We
consider two kinds of multiplicity detection: the {\em strong} and
{\em weak} multiplicity detections.

Under the {\em weak} multiplicity detection, for every node $u_i$,
$d_i$ is a function $\N \mapsto \{\circ,\bot,\top\}$ defined as
follows: $d_i(t)$ is equal to either $\circ$, $\bot$, or $\top$
according to $u_i$ contains none, one or several robots at time instant
$t$. If $d_i(t) = \circ$, then we say that $u_i$ is
{\em free} at instant $t$, otherwise $u_i$ is said {\em occupied} at
instant $t$. If $d_i(t) = \top$, then we say that $u_i$ contains a
{\em tower} at instant $t$.

Under the {\em strong} multiplicity detection, for every node $u_i$,
$d_i$ is a function $\N \mapsto \N$ where $d_i(t) = j$ indicates that
there are $j$ robots in node $u_i$ at instant $t$. If $d_i(t) = 0$,
then we say that $u_i$ is {\em free} at instant $t$, otherwise $u_i$
is said {\em occupied} at instant $t$. If $d_i(t) > 1$, then we say
that $u_i$ contains a {\em tower (of $d_i(t)$ robots)} at instant $t$.

As previously, to be as general as possible, our impossibility results
are written assuming the strong multiplicity detection, while our
algorithms assume the weak multiplicity detection.

\paragraph{Configurations and Views.}
To define the notion of {\em configuration}, we need to use an
arbitrary order $\prec$ on nodes. The system being anonymous, robots
do not know this order. (Actually, this order is used in the reasoning
only.) Let $v_1, \ldots, v_n$ be the list of the nodes in $G$ ordered
by $\prec$. The configuration at time $t$ is $d_1(t), \ldots,
d_{n}(t)$.
We denote by {\em initial configurations} the configurations from
which the system can start at time 0.  Every configuration where all
robots stay idle forever is said to be {\em terminal}.  Two
configurations $d_1, \ldots, d_{n}$ and $d_1', \ldots, d_{n}'$ are
{\em indistinguishable} ({\em distinguishable} otherwise) if and only
if there exists an automorphism $f$ on $G$ satisfying the additional
condition: $\forall v_i \in V$, we have $d_i = d_j'$ where $v_j =
f(v_i)$.

The {\em view} of robot $\mathcal R$ at time $t$ is a labelled graph
isomorphic to $G$, where every node $u_i$ is labelled by $d_i(t)$,
except the node where $\mathcal R$ is currently located, this latter
node $u_j$ is labelled by $d_j(t),*$. (Indeed, any robot knows the
multiplicity of the node where it is located.) Hence, from its view, a
robot can compute the view of all other robots, and decide whether
some other robots have the same view as its own.

Every decision to move is based on the view obtained during the last
Look action. However, it may happen that some edges incident to a node
$v$ currently occupied by the deciding robot look identical in its
view, {\em i.e.}, $v$ lies on a symmetric axis of the
configuration. In this case, if the robot decides to take one of these
edges, it may take any of them. As in related work ({\em
  e.g.},~\cite{FIPS07c,FIPS08c,LPT10c}), we assume the worst-case
decision in such cases, {\em i.e.} the actual edge among the
identically looking ones is chosen by the adversary.

\paragraph{Execution.}
We model the executions of our protocol in $G$ by the list of
configurations through which the system goes. So, an {\em execution}
is a maximal list of configurations $\gamma_0, \ldots,
\gamma_i$ such that $\forall j>0$, we have:
\begin{enumerate}
\item $\gamma_{j-1} \neq \gamma_j$.
\item $\gamma_j$ is obtained from $\gamma_{j-1}$ after some robots
  move from their locations in $\gamma_{j-1}$ to a neighboring node.
\item For every robot $\mathcal R$ that moves between $\gamma_{j-1}$
  and $\gamma_j$, there exists $0\leq j'\leq j$, such that $\mathcal
  R$ takes its decision to move according to its program and its view
  in $\gamma_{j'}$.
\end{enumerate}
An execution $\gamma_0, \ldots, \gamma_i$ is said to be {\em
  sequential} if and only if $\forall j>0$, exactly one robot moves
between $\gamma_{j-1}$ and $\gamma_j$.

\paragraph{Exploration.}
We consider the {\em exploration} problem, where $k$ robots, initially
placed at different nodes, collectively explore an $(i,j)$-grid before
stopping moving forever. By ``collectively'' explore we mean that
every node is eventually visited by at least one robot.  More
formally, a protocol $\mathcal P$ \emph{deterministically}
(resp. \emph{probabilistically}) solves the exploration problem if and
only if every execution $e$ of $\mathcal P$ starting from a {\em
  towerless} configuration satisfies: (1) $e$ terminates in
\emph{finite time} (resp. in \emph{finite expected time}), and (2)
every node is visited by at least one robot during $e$.

Observe that the exploration problem is not defined for $k > n$ and is
straightforward for $k=n$. (In this latter case the exploration is
already accomplished in the initial towerless configuration.)

\section{Bounds}\label{Bounds}

In this section, we first show that, except for trivial cases where
$k=n$, when robots are oblivious, the model is atomic, and the
multiplicity is strong, at least three robots are necessary to solve
the (probabilistic or deterministic) exploration of any grid (Theorem
\ref{theo:non2}). Moreover, in a $(2,2)$-Grid, four robots are
necessary (Theorem \ref{theo:22}). Finally, at least five robots are
necessary to solve the deterministic exploration of a $(3,3)$-Grid
(Theorem \ref{theo:imp4}).  In the two next sections, we show that all
these bounds are also sufficient to solve the deterministic
exploration in the asynchronous and non-atomic CORDA model.

Given that robots are oblivious and there are more nodes than robots,
any terminal configuration should be distinguishable from any possible
initial (towerless) configuration. So, we have:

\begin{rem}\label{rem:1}
  Any terminal configuration of any (probabilistic or deterministic)
  exploration protocol for a grid of $n$ nodes using $k<n$ oblivious
  robots contains at least one tower.
\end{rem}

\begin{theo}\label{theo:non2}
  There exists no (probabilistic or deterministic) exploration
  protocol in ATOM using $k \leq 2$ oblivious robots for any
  $(i,j)$-Grid made of at least $3$ nodes.
\end{theo}
\begin{proof}
  By Remark \ref{rem:1}, there is no protocol allowing one robot to
  explore any $(i,j)$-Grid made of at least 2 nodes. Indeed, any
  configuration is towerless in this case.  Assume by contradiction,
  that there exists a protocol $\mathcal P$ in ATOM to explore with 2
  oblivious robots an $(i,j)$-Grid made of at least 3 nodes.  Consider
  a sequential execution $e$ of $\mathcal P$ that terminates. (By
  definition, if $\mathcal P$ is deterministic, all its executions
  terminates; while if $\mathcal P$ is probabilistic, at least one of
  its sequential execution must terminate.)  Then, $e$ starts from a
  towerless configuration (by definition) and eventually reaches a
  terminal configuration containing a tower (by Remark
  \ref{rem:1}). As $e$ is sequential, the two last configurations of
  $e$ consist of a towerless configuration followed by a configuration
  containing one tower. These two configurations form a possible
  sequential execution that terminates while only two nodes are
  visited, thus a contradiction.
\end{proof}

Any $(2,2)$-Grid is isomorphic to a 4-size ring. It is shown in
\cite{DPT09c} that no (probabilistic or deterministic) exploration
using less than four oblivious robots is possible for any ring of size
at least four in ATOM. So:

\begin{theo}[\cite{DPT09c}]\label{theo:22}
  There exists no (probabilistic or deterministic) exploration
  protocol using $k \leq 3$ oblivious robots in ATOM for a
  $(2,2)$-Grid.
\end{theo}

\begin{lem}\label{lem:distinct}
  Considering any deterministic exploration protocol $\mathcal P$ in
  ATOM using $k$ oblivious robots for a $(3,3)$-Grid,
  there exist sequential executions of $\mathcal P$, $e = \gamma_0,
  \ldots, \gamma_w$, in which:
\begin{itemize}
\item For every $x,y$ with $0\leq x<y$, $\gamma_x$ and $\gamma_y$ are
  distinguishable.
\item Only the first configuration $\gamma_0$ is towerless.
\end{itemize}
\end{lem}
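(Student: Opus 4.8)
The plan is to obtain the desired execution in three stages: first produce some terminating sequential execution, then compress it so that all its configurations become pairwise distinguishable, and finally cut off its initial towerless segment so that only the starting configuration remains towerless.

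For the first stage, I would start from an arbitrary towerless (hence legal initial) configuration and repeatedly activate, one robot at a time, a robot whose program tells it to move; as long as the current configuration is not terminal such a robot exists, and since $\mathcal{P}$ is deterministic every execution --- in particular this sequential one --- terminates in finite time. By Remark~\ref{rem:1} the terminal configuration it reaches contains a tower, so the execution is non-trivial and ends with a towered configuration while beginning with a towerless one.

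The core stage, and the main obstacle, is turning this into an execution with pairwise distinguishable configurations. Here I would exploit the equivariance of oblivious anonymous protocols under graph automorphisms: a robot's move depends only on its view, and the view of a robot sitting at a node $v$ in a configuration $\gamma$ is identical to the view of a robot at $f(v)$ in the image configuration $f(\gamma)$, for any automorphism $f$ of $G$. Concretely, suppose the execution $\gamma_0, \ldots, \gamma_m$ contains two indistinguishable configurations $\gamma_x$ and $\gamma_y$ with $x<y$, and let $f$ be an automorphism mapping $\gamma_x$ onto $\gamma_y$. Applying $f^{-1}$ to the suffix $\gamma_y, \ldots, \gamma_m$ yields $\gamma_x = f^{-1}(\gamma_y), f^{-1}(\gamma_{y+1}), \ldots, f^{-1}(\gamma_m)$, and by the equivariance above each single-robot move of the original suffix is carried to a legal single-robot move of $\mathcal{P}$; hence $\gamma_0, \ldots, \gamma_x, f^{-1}(\gamma_{y+1}), \ldots, f^{-1}(\gamma_m)$ is again a valid sequential execution. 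Since termination is invariant under automorphisms, $f^{-1}(\gamma_m)$ is still terminal, so the new execution is still maximal, starts from the same towerless $\gamma_0$, and is strictly shorter (we deleted $\gamma_{x+1}, \ldots, \gamma_y$). Iterating this splicing strictly decreases the length each time, so the process stops at an execution whose configurations are pairwise distinguishable. The delicate points to check carefully are that $f^{-1}$ really sends legal moves to legal moves (determinism, plus anonymity, plus the unoriented grid), and that the adversarial tie-breaking in symmetric configurations can always be chosen to realize the spliced moves --- which is admissible because the lemma only asks us to exhibit one such execution.

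It remains to enforce the tower condition. I would observe that every towerless configuration places the robots on distinct nodes and is therefore itself a legal initial configuration. In the pairwise-distinguishable execution obtained above, let $\gamma_p$ be the last towerless configuration (it exists, as $\gamma_0$ is towerless, and $p<m$ since the terminal configuration is towered). The suffix $\gamma_p, \ldots, \gamma_m$ is a valid sequential execution in its own right: it is maximal (same terminal configuration), its configurations remain pairwise distinguishable (a sub-list of distinguishable configurations stays distinguishable), it starts from the towerless $\gamma_p$, and by maximality of $p$ every later configuration contains a tower. Re-indexing $\gamma_p$ as $\gamma_0$ and $\gamma_m$ as $\gamma_w$ then yields exactly the execution claimed by the lemma.
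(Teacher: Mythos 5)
Your proof is correct and takes essentially the same approach as the paper: the paper likewise uses automorphism equivariance to splice out indistinguishable configurations (replacing the suffix after $\gamma_x$ by its image under the automorphism) and restricts to the suffix beginning at the last towerless configuration. The only difference is the order of the two operations --- the paper cuts to the last towerless configuration first and then splices, whereas you splice first and then cut --- which is immaterial, since splicing preserves towers (indistinguishable configurations have the same multiplicities) and taking a suffix preserves pairwise distinguishability.
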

\begin{proof}
  Consider any exploration protocol $\mathcal P$ in ATOM 
  using $k$ oblivious robots for a $(3,3)$-Grid. Consider any
  sequential execution $e$ of $\mathcal P$. By definition of the
  exploration, $e$ is finite and starts from a towerless
  configuration. Moreover, the terminal configuration of $e$ contains
  a tower, by Remark \ref{rem:1}.

  Take the last towerless configuration of $e$ and all remaining
  configurations that follow in $e$ (all of them contain a tower) and
  form $e'$. $e'$ is a possible sequential execution of $\mathcal P$
  where only the first configuration is towerless.

  Let $e'=\alpha^0,\ldots,\alpha^m$. Let two configurations
  $\alpha^x =d^x_1, \ldots, d^x_{n}$ and $\alpha^y =
  d^y_1,\ldots,d^y_{n}$ of $e'$, that are indistinguishable with $0
  \leq x<y$. Then, by definition, there exists an automorphism $f$ on
  $G$ satisfying the additional condition: Let $v_0,\ldots,v_r$ be the
  nodes of $V$, for all $s \in [0..r]$, we have $d^x_s = d^y_{\ell}$
  where $v_{\ell} = f(v_s)$. Then,
  $\alpha^0,\ldots,\alpha^x,\beta^{y+1},\beta^{m}$ is a possible
  sequential execution of $\mathcal P$ such that $\forall z \geq y+1$,
  we have $\beta^{z}= d^{z}_{g(1)},\ldots,d^{z}_{g(n)}$ where $g$ is a
  bijection such that $\forall s \in [1..n]$, $f(v_s) = v_{g(s)}$ and
  $\alpha^{z}= d^{z}_1,\ldots,d^{z}_{n}$. Moreover, in
  $\alpha^0,\ldots,\alpha^x,\beta^{y+1},\beta^{m}$, the number of
  configurations indistinguishable from $\alpha^x$ decreases by
  one. Repeating the same construction, we eventually obtain a
  possible sequential execution $e''= \rho_0,\ldots,\rho_w$ of
  $\mathcal P$ starting from a towerless configuration only followed
  by configurations containing at least one tower such that for every
  $x,y$ with $0\leq x<y$, $\rho_x$ and $\rho_y$ are
  distinguishable.
\end{proof}

\begin{figure} 
  \begin{center}
  \epsfig{figure=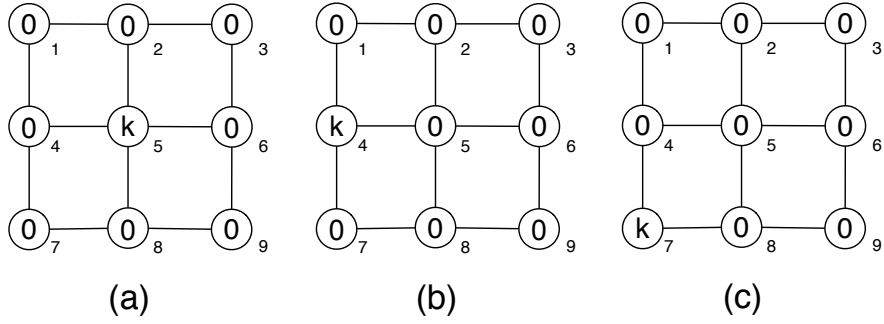,width=12cm}\\ 
  \end{center}
  \caption{Three possible configurations in a $(3,3)$-Grid with a
    tower of $k$ robots.\label{fig:tourK}}
\end{figure}

\begin{lem}\label{lem:tour3}
  Considering any deterministic exploration protocol $\mathcal P$ in ATOM
  model using $k$ oblivious robots for a $(3,3)$-Grid, if there exists
  an execution of $\mathcal P$ $e=\gamma_0\ldots\gamma_x\ldots$ where
  $\gamma_x$ contains a tower of $k$ robots, then there exists an
  execution $e'$ starting with the prefix $e=\gamma_0\ldots\gamma_x$
  such that at most one new node can be visited after $\gamma_x$.
\end{lem}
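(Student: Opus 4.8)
The plan is to analyze what happens once a tower of all $k$ robots sits on a single node of the $(3,3)$-Grid. The key observation is that in such a configuration \emph{all} the information available to the robots is contained in the symmetry type of the node holding the tower, because every other node is free. In a $(3,3)$-Grid there are, up to automorphism, only three distinct node positions: the center, a corner, and an edge-midpoint (side) node; these are exactly the three cases suggested by Figure~\ref{fig:tourK}. So the configuration $\gamma_x$ with a tower of $k$ robots is, up to isomorphism, one of only three possible configurations, and I would organize the proof as a case analysis over these three.

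\medskip
\noindent\textbf{Main argument.} In each of the three cases I would exploit the residual symmetry of the configuration. When the tower sits on the center, the configuration is invariant under the full symmetry group of the grid (the dihedral group of order $8$); when it sits on a corner, it is invariant under the diagonal reflection through that corner; when it sits on a side node, it is invariant under the reflection through the perpendicular axis. Since all robots occupy the same node, they all share the same view, and since $\mathcal P$ is deterministic, they all compute the same set of candidate target edges. The crucial point is that the nontrivial automorphism $\sigma$ fixing $\gamma_x$ maps this candidate set to itself: the edges leaving the tower node come in $\sigma$-orbits, and the robots cannot break the tie between edges in the same orbit. The plan is to invoke the worst-case adversary assumption stated in the model (the ``Configurations and Views'' paragraph): when several incident edges look identical, the adversary chooses which robot takes which. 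I would have the adversary split the tower so that, whenever robots move off the tower node, they are sent symmetrically along a $\sigma$-orbit, keeping the resulting configuration $\sigma$-symmetric.

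\medskip
\noindent\textbf{Reaching termination after one new node.} Building on this, I would argue that from $\gamma_x$ the adversary can drive the execution to a terminal configuration in which at most one additional node is visited. In the center case this is immediate: either the robots stay (so no new node is visited and $\gamma_x$ is already terminal, since a non-terminal deterministic protocol would have to move and a symmetric move off the center sends robots to two or four distinct side/corner nodes simultaneously, which the adversary can arrange), or I track the symmetric motion and show the visited set closes up quickly; the ``at most one new node'' bound is the quantity I must certify in each branch. For the corner and side cases I would similarly use the fixed reflection axis: any move either keeps robots on the axis, visiting at most one new axis node, or moves them off the axis in a symmetric pair, and I would show the adversary can always choose a schedule under which exploration stalls after a single fresh node. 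The engine throughout is Lemma~\ref{lem:distinct}: because every reachable configuration must be distinguishable from all earlier ones, the symmetry forces the sequence of reachable configurations to be extremely short, so that termination (a terminal configuration, which must contain a tower by Remark~\ref{rem:1}) is reached almost immediately.

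\medskip
\noindent\textbf{Main obstacle.} The hard part will be making the ``at most one new node'' claim airtight simultaneously in all three geometric cases, because the three symmetry groups behave differently and a careless case could allow a clever protocol to visit two new nodes before the symmetry is re-established. In particular I must be careful in the center case, where after the first symmetric move the tower may fragment into an orbit of size two or four, and I have to verify that the adversary can always re-merge or re-symmetrize the robots fast enough that no second new node becomes forced. I expect the cleanest route is to show that the only configurations reachable from $\gamma_x$ under the symmetric adversary schedule are themselves $\sigma$-symmetric, and that among such configurations at most one introduces a previously unvisited node before the process must halt by the distinguishability constraint of Lemma~\ref{lem:distinct}.
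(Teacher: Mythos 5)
Your case decomposition (tower at the center, at a corner, or at a side node, all other nodes free) is exactly the paper's, and you correctly note that all robots share one view and that the adversary resolves ties among symmetric edges. However, the core mechanism you propose does not close the argument. Your stated ``engine'' is a misreading of Lemma~\ref{lem:distinct}: that lemma is an \emph{existence} statement (there exist sequential executions whose configurations are pairwise distinguishable); it imposes no constraint whatsoever on the execution you are extending, so it cannot force the process to ``stall.'' The engine actually needed --- and used by the paper --- is the termination requirement of the exploration problem itself, turned into a contradiction: if the protocol orders a move from a tower configuration in which all incident edges are equivalent (tower at the center or at a corner), then, since every robot has the same view and the adversary chooses among identical-looking edges, the adversary can send the moving robots back so as to recreate the preceding configuration, then replay the step, and so on, producing an infinite execution. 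This contradicts the fact that every execution of a correct deterministic exploration protocol terminates. Hence a correct protocol never moves from such a configuration: it is terminal, and \emph{zero} new nodes are visited from it. The side-node case then finishes the lemma: the side node has one distinguished neighbor (the center) and two symmetric ones (corners), so choosing the adversary to activate all robots synchronously, the tower moves as a block to the center or to a corner, visiting at most one new node, after which the previous argument forbids any further move.

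Your second difficulty is that the adversary strategy you propose --- splitting the tower ``symmetrically along a $\sigma$-orbit'' --- works against you: fragmenting the tower onto two or four previously free nodes visits \emph{more} new nodes, not fewer, which is precisely the re-merging problem you flag in your ``main obstacle'' paragraph. Since the lemma only asks you to exhibit \emph{one} extension $e'$, you are free to choose the schedule, and the right choice is to keep the tower together (synchronous activation with identical tie-breaking), never to split it. Without the oscillation-to-contradiction step, your outline never actually certifies the ``at most one new node'' bound in any of the three cases; with it, the bound is immediate.
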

\begin{proof}
  Assume the existence of an execution of $\mathcal P$
  $e=\gamma_0\ldots\gamma_x\ldots$ where $\gamma_x$ contains a tower
  of $k$ robots. Then, $\gamma_x$ is not $\gamma_0$ and is
  indistinguishable from configuration $(a)$, $(b)$, or $(c)$ of
  Figure~\ref{fig:tourK}.  In Figure \ref{fig:tourK}, symbols inside
  the circles represent the multiplicity of the node and numbers next
  the circle are node's labels to help explanations only.  Without
  loss of generality, assume that $\gamma_x$ is either configuration
  $(a)$, $(b)$, or $(c)$.

  To visit a new node, one of the robots should eventually decide to
  move. Moreover, in $\gamma_x$, all robots have the same view. So,
  the adversary can choose any of them to move.
  \begin{itemize}
  \item[(1)] Consider configuration $(a)$. Then, all possible
    destinations for the robots are symmetric. So, the adversary can
    activate the robots in a way we retrieve configuration
    $\gamma_{x-1}$. Then, it can activate robots in a way that the
    system return to $\gamma_x$, and so on. Hence, in this case, there
    exists a possible execution of $\mathcal P$ that is infinite, a
    contradiction. So, from $(a)$, $\mathcal P$ cannot try to visit a new
    node.
  \item[(2)] Consider configuration $(b)$. 

    If robots synchronously move to
    node 5, node 5 may be unvisited. So, it is possible to visit a new
    node, but then we retrieve Case $(1)$. So, we can conclude
    that in this case from $(b)$ only one new node can be visited.
   
    If robots synchronously move to node 1 (resp. 7), then this node
    may be unvisited. So, it is possible to visit a new node. But, in
    node 1, all possible destinations for the robots are
    symmetric. So, the adversary can activate the robots in a way that
    we retrieve the previous configuration, if we want to visit
    another node. So, as for Case $(1)$, we can conclude that no new
    node can be visited, that is from $(b)$ only one new node can be
    visited.

  \item[(3)] Using a reasoning similar to case $(1)$, we can
    conclude that from $(c)$, $\mathcal P$ cannot try to visit a new node.
\end{itemize}
\end{proof}

\newcommand{\href}[1]{{\tt #1}}

\begin{lem}\label{lem:pascal}
  Assume that there exists a deterministic exploration protocol $\mathcal P$ in 
  ATOM model using $3$ oblivious robots for a
  $(3,3)$-Grid. Consider any suffix $\gamma_w,\ldots,\gamma_z$ of any
  sequential execution of $\mathcal P$ where:
\begin{itemize}
\item For every $x,y$ with $0\leq x<y$, $\gamma_x$ and $\gamma_y$ are
  distinguishable.
\item $\gamma_w$ contains a tower of $2$ robots.
\end{itemize}
Then, at most 4 new nodes can be visited from $\gamma_w$ before a
robot of the tower moves.
\end{lem}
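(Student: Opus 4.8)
The plan is to exploit the fact that, as long as no robot of the tower leaves its node $t$, the only robot that can relocate is the third (isolated) robot. Since the suffix lies inside a \emph{sequential} execution, exactly one robot moves at each step, and before a tower robot moves that robot is always the third one. Hence every configuration occurring between $\gamma_w$ and the first move of a tower robot is completely determined by the current position $p$ of the third robot: either $p \neq t$, and the configuration is a $2$-tower at $t$ together with a singleton at $p$, or $p = t$, and the configuration is a $3$-tower at $t$. A node is newly visited precisely when the third robot reaches for the first time a node other than $t$; the node $t$ and the node $p_w$ occupied in $\gamma_w$ are already visited, hence are never new.

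First I would translate the distinguishability hypothesis into an orbit condition. Two $2$-tower configurations with the third robot at $p$ and at $p'$ (both different from $t$) are indistinguishable exactly when some automorphism of $G$ fixing $t$ sends $p$ to $p'$; moreover the $3$-tower configuration is distinguishable from every $2$-tower one (multiplicities are preserved by automorphisms) and can therefore occur at most once. Consequently the distinct non-$t$ positions taken by the third robot lie in pairwise distinct orbits of $\mathrm{Stab}(t)$, the stabilizer of $t$ in $\mathrm{Aut}(G)$. Since the third robot already sits at $p_w$ and can never return to the orbit of $p_w$ without producing a configuration indistinguishable from $\gamma_w$, the number of new nodes is bounded by the number of orbits of $\mathrm{Stab}(t)$ on $V\setminus\{t\}$ minus one.

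It then remains to carry out the orbit count for the three possible types of node $t$, using that $\mathrm{Aut}(G)$ for the $(3,3)$-Grid is the dihedral group $D_4$ of order $8$ (the center is the unique degree-$4$ node, the four corners the degree-$2$ nodes, the four remaining nodes the degree-$3$ ones). If $t$ is the center, then $\mathrm{Stab}(t) = \mathrm{Aut}(G)$ and there are only two orbits, the corners and the edge nodes, giving at most one new node. If $t$ is a corner, then $\mathrm{Stab}(t)$ is the order-$2$ group generated by the reflection through the diagonal at $t$, and $V\setminus\{t\}$ splits into five orbits; the same count of five orbits holds when $t$ is an edge node, the stabilizer then being generated by the reflection across the axis perpendicular to that edge. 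In every case the bound $(\text{orbits})-1$ is at most $5-1 = 4$, which is exactly the claim, and the corner/edge cases show it is tight.

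The delicate point to get right is the interaction with the tower itself: one must rule out that the third robot gains extra reachable nodes by repeatedly merging into and splitting off from the tower. This is handled by the two observations already used, namely that the $3$-tower configuration is distinguishable from all others and hence appears at most once, and that occupying $t$ never creates a new node. Establishing these, together with the elementary orbit computation, is all that is needed; the adjacency constraints on the third robot's walk can only reduce the number of reachable orbits, so they never increase the upper bound.
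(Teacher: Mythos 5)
Your argument is correct, but it takes a genuinely different route from the paper. The paper gives no combinatorial proof at all: it discharges this lemma by model checking, encoding exactly the configurations you isolate (an immobile two-robot tower plus one mobile robot) as a Lustre program and letting a model checker verify the invariant that at most $4$ new nodes are visited before a configuration class repeats. You instead turn the statement into an orbit count: in a sequential execution with the tower fixed at $t$, each configuration is determined by the lone robot's position, two such configurations are indistinguishable exactly when some element of $\mathrm{Stab}(t)\leq\mathrm{Aut}(G)$ maps one position to the other, and the $3$-tower configuration is unique and distinguishable from all $2$-tower ones; hence pairwise distinguishability forces the non-$t$ positions to lie in pairwise distinct orbits of $\mathrm{Stab}(t)$ on $V\setminus\{t\}$, and since the orbit of the starting position $p_w$ is already burnt, the number of new nodes is at most the orbit count minus one, i.e. $2-1=1$ when $t$ is the center and $5-1=4$ when $t$ is a corner or a degree-$3$ node. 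Your two guard observations --- the $3$-tower can occur at most once, and re-entering the orbit of $p_w$ would recreate a configuration indistinguishable from $\gamma_w$ --- correctly dispose of the merge-and-split-with-the-tower scenario, and you are right that adjacency constraints on the walk can only lower the count, so they may be ignored. The trade-off: your proof is self-contained, human-checkable, and sharper (it shows only one new node is reachable when the tower sits at the center), whereas the paper's mechanized proof protects against exactly the kind of symmetry case analysis yours relies on, at the cost of resting on the fidelity of a model and toolchain that cannot be audited from the paper itself.
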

\begin{proof}
  Proving this lemma is particularly tedious and error-prone because
  many cases must be taken into account (positions of robots, symmetry
  classes, etc.). The proof was thus completed as automatically as
  possible, by using model-checking techniques. The method is briefly
  sketched here, a detailed presentation, together with the source
  code and the necessary tools can be found on the
  web~\footnote{~\url{http://www-verimag.imag.fr/~raymond/misc/robots/}.}.  
  First, an operational model of the problem is built: this model is a
  reactive program that manages an abstract view of the grid and
  robots, according to a flow of (random) move commands. This model is
  restricted to the configurations relevant for the property: an
  immobile two-robots tower and a mobile single robot. The reactive
  program ({\em i.e.}, the model) computes the consequences of the
  moves induced by the input commands; in particular, it takes trace
  of the {\em visited} nodes, and the encountered indistinguishable
  configuration classes. As soon as such a class has been reached
  twice, a flag {\em stuck} is raised. And, all along the execution, a
  {\em validity} flag is computed that way: {\em stuck} $\Rightarrow$
  number of new {\em visited} nodes is $\leq$ 4. A model-checker tool
  is then used to check the following invariant: whatever be a
  sequence of input move commands, {\em valid} remains true. In other
  terms, the invariance of {\em valid} is sufficient to establish
  that, starting from any configuration with a tower and a single
  moving robot, at most 4 new nodes can be visited before the
  configuration becomes indistinguishable from some already
  encountered configuration. Concretely, the model is written in the
  Lustre language~\cite{HCRP91j,R08b}, and is itself partially
  generated by a "meta" program written in oCaml (which computes, in
  particular, the classes). The source is made of approximately 150
  lines of oCaml, and 100 lines of Lustre. The invariance checking is
  performed by the model-checker from the lustre distribution.
\end{proof}

\begin{theo}\label{theo:imp3}
  There exists no deterministic exploration protocol in ATOM using
  $k\leq 3$ oblivious robots for a $(3,3)$-Grid.
\end{theo}
\begin{proof}
  According to Theorem \ref{theo:non2}, we only need to consider the
  case of $3$ robots.

  Assume that there exists an exploration protocol $\mathcal P$ in 
  ATOM for a $(3,3)$-Grid using 3 robots.  By Lemma
  \ref{lem:distinct}, there exists a sequential execution $e =
  \gamma_0, \ldots, \gamma_w$ that starts from a towerless
  configuration, only followed by configurations containing at least
  one towers, and such that for every $x,y$ with $0\leq x<y$,
  $\gamma_x$ and $\gamma_y$ are distinguishable.

  In $\gamma_0$, 3 nodes are visited. The execution being sequential,
  no new node is visited in the first step where a tower of two robots
  is created. So, in $\gamma_1$, 3 nodes are visited and there exists
  a tower of two robots $\mathcal R_1$ and $\mathcal R_2$.
\begin{itemize}
\item Assume that $\mathcal R_1$ and $\mathcal R_2$ never moved after
  $\gamma_1$. Then, by Lemma \ref{lem:pascal}, at most 4 new nodes are
  visited until the termination of $e$. So, at the termination of $e$,
  at most 7 distinct nodes have been visited, a contradiction.
\item Assume that $\mathcal R_1$ or $\mathcal R_2$ eventually
  moved. Let $\gamma_{\ell}$ the first configuration from which $\mathcal
  R_1$ or $\mathcal R_2$ moves. From the previous case, at most 7
  distinct nodes have been visited before $\gamma_{\ell}$. The execution
  being sequential, only one robot of the tower moves during the step
  from $\gamma_{\ell}$ to $\gamma_{i+1}$ and as in $e$ only the first
  configuration is towerless, that robot moves to an occupied
  node. Now, the view of $\mathcal R_1$ and $\mathcal R_2$ are
  identical in $\gamma_{\ell}$. So, there exists an execution $e'$ starting
  from the prefix $\gamma_0, \ldots, \gamma_{\ell}$ where both $\mathcal
  R_1$ and $\mathcal R_2$ move from $\gamma_{\ell}$ to the same occupied
  node. As no new node is visited during the step, still at most 7
  nodes are visited once the system is in the new configuration and
  this configuration contains a tower of 3 robots. By Lemma
  \ref{lem:tour3}, at most one new node is visited from this latter
  configuration. So, at the termination of $e'$, at most 8 distinct
  nodes have been visited, a contradiction.
\end{itemize}
\end{proof} 

\begin{figure} 
  \begin{center}
  \epsfig{figure=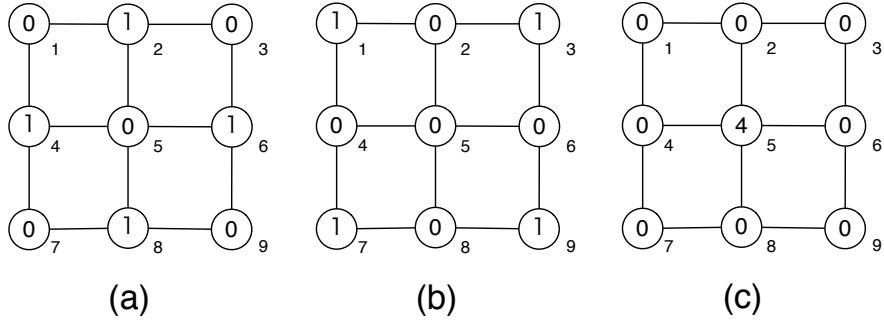,width=12cm}\\ 
  \end{center}
  \caption{Three possible configurations in a $(3,3)$-Grid with $4$
    robots. Numbers inside the circles represent the multiplicity
    of the node. Numbers near the circles are node's labels that are used 
   to ease the explanations only. \label{fig:4RG33}}
\end{figure}

\begin{theo}\label{theo:imp4}
   There exists no deterministic exploration protocol in ATOM using
  $k\leq 4$ oblivious robots for a $(3,3)$-Grid. 
\end{theo}
\begin{proof}
 According to Theorem \ref{theo:imp3}, we only need to consider the
  case of $4$ robots.

  Assume, by the way of contradiction, that there exists an
  exploration protocol $\mathcal P$ for a $(3,3)$-Grid with 4 robots
  in ATOM.

  Figure \ref{fig:4RG33} depicts three possible configurations for a
  $(3,3)$-Grid with 4 robots. In Figure \ref{fig:4RG33}, symbols
  inside the circles represent the multiplicity of the node and numbers
  next the circle are node's labels to help explanations only. Note
  that both Configuration $(a)$ and $(b)$ can be initial configuration.

  From now on, consider any synchronous execution of $\mathcal P$
  (synchronous executions are possible in the asynchronous model)
  starting from configuration $(a)$. By ``synchronous'' we mean that
  robots execute each operation of each cycle at the same time.

  Configuration $(a)$ is not a terminal configuration by Remark
  \ref{rem:1}. So at least one robot move in the next Move
  operation. Moreover, the views of all robots are identical in
  $(a)$. So, every robot moves in the next Move operation. Two cases
  are possible:
\begin{itemize}
\item Every robot moves to Node 5 and the system reaches Configuration
  $(c)$.  In this case, none of the corners has been visited, so
  Configuration $(c)$ is not terminal and at least one robot moves in
  during the next Move operation. Moreover, the views of all robots
  are identical, so every robot moves in the next Move operation. Each
  robot cannot differentiate its four possible possible
  destinations. So, the adversary can choose destinations so that
  the system reaches configuration $(a)$ again.
\item Every robot moves to a corner node and as its view is symmetric,
  the destination corner is chosen be the adversary. In this case, the
  adversary can choose destinations so that the system reaches
  configuration $(b)$. Configuration $(b)$ being not terminal, at
  least one robot moves in during the next Move operation. Moreover,
  the views of all robots are identical, so every robot moves in the
  next Move operation. Each robot cannot differentiate its two
  possible possible destinations. So, the adversary can choose to
  destinations so that the system reaches configuration $(a)$ again.
\end{itemize}
From the two previous case, we can deduce that there exist executions
of $\mathcal P$ that never terminates, so $\mathcal P$ is not an
exploration protocol, a contradiction.
\end{proof}

\section{Deterministic solution using three robots}\label{sec:3R}

In this section, we focus on solutions for the exploration problem
that use three robots only, in CORDA, and assuming weak multiplicity
detection. Recall that there exists no deterministic solution for the
exploration using three robots in a $(2,2)$- or $(3,3)$-grid assuming
that model (Section~\ref{Bounds}). Moreover, exploring a $(3,1)$-grid
using three robots is straightforward.  So, we consider all remaining
cases. We split our study in two cases. A general deterministic
solution for any $(i,j)$-grid such that $j>3$ is given in Subsection
\ref{3-gen}. The particular case of the $(2,3)$-grid is solved in
Subsection~\ref{2-3}.

\subsection{General Solution}\label{3-gen}

\paragraph{Overview.}

Our deterministic algorithm works according to the following three
main phases:
\begin{description}
\item [\noindent {\tt Set-Up} phase:] The aim of this phase is to
  create a single line of robots starting at a corner and along one of
  the longest borderlines of the grid --- refer to
  Figure~\ref{line}. Let us refer to this configuration as the {\tt
    Set-Up} configuration. The phase can be initiated from any
  arbitrary towerless configuration that is not a {\tt Set-Up}
  configuration.  Note that no tower is created during this phase.
\item [\noindent{\tt Orientation} phase:] This phase follows the {\tt
    Set-Up} phase.  Starting from a {\tt Set-Up} configuration, this
  phase aims at giving an orientation to the grid. To achieve that,
  one tower is created allowing the robots to establish a common
  coordinate system --- refer to Figure~\ref{EXPP}.  The resulting
  configuration is called an {\tt Oriented} configuration.
\item [\noindent{\tt Exploration} phase:] This phase starts from an
  {\tt Oriented} configuration in which exactly one node is occupied
  by one single robot, called {\em Explorer}.  Based on the coordinate
  system defined during the {\tt Orientation} phase, the explorer
  visits all the nodes, except three already visited ones --- refer to
  Figure~ \ref{Exploration}, page \pageref{Exploration}.
\end{description}
We now describe the three above phases in more details.

\begin{figure} 
 \begin{minipage}[b]{.46\linewidth}
  \begin{center}
  \epsfig{figure=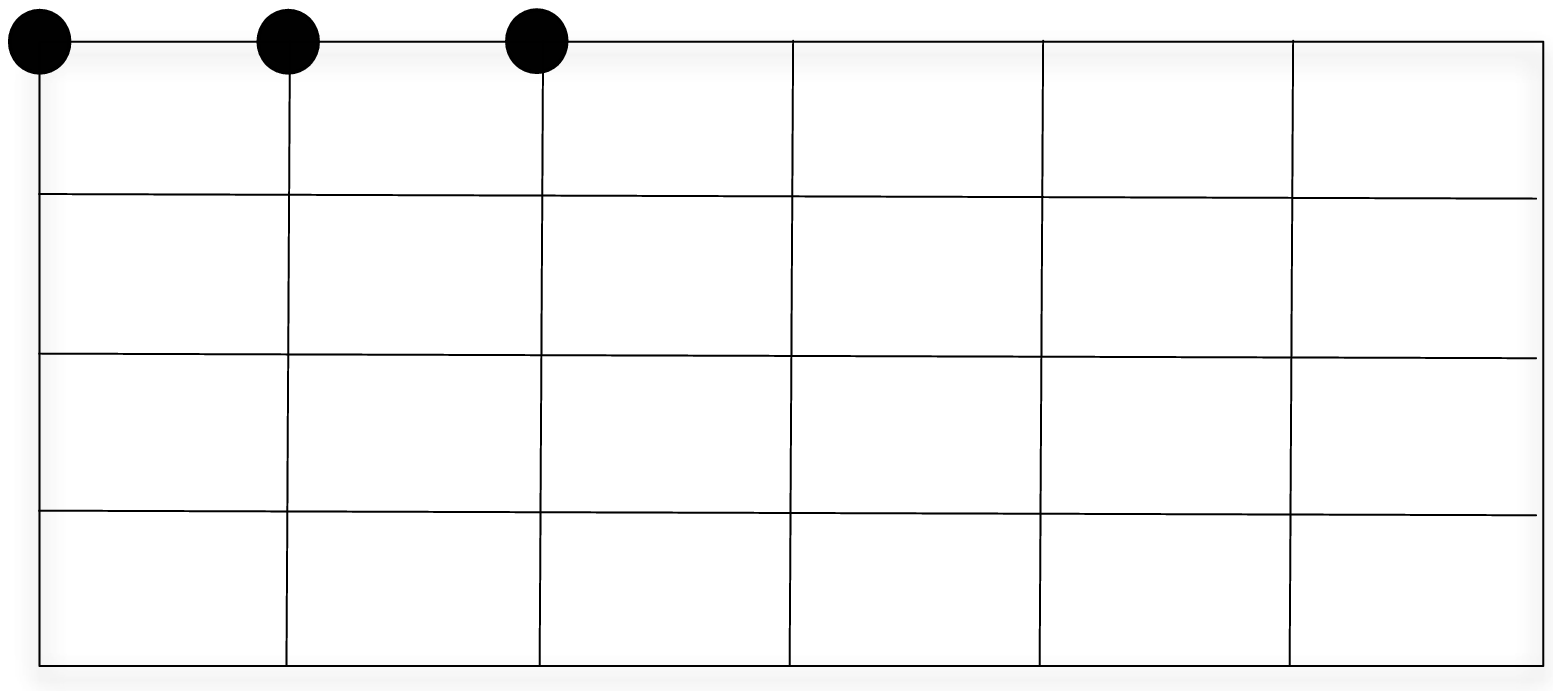,width=5cm}\\
  \caption{{\tt Set-Up} Configuration}\label{line}
  \end{center}
 \end{minipage} \hfill
\begin{minipage}[b]{.46\linewidth}
\begin{center}
 \epsfig{figure=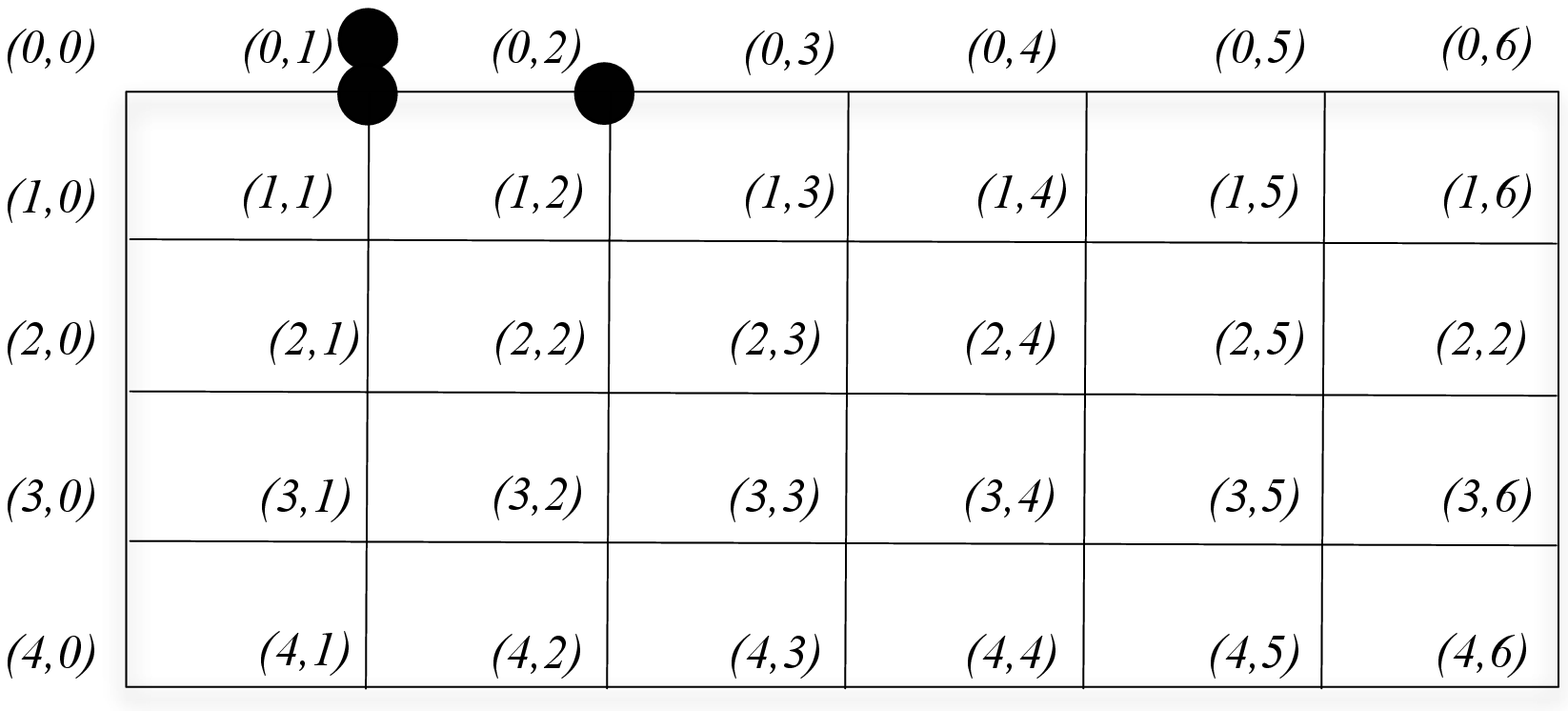,width=6cm}\\
 \caption{Coordinate system built by the {\tt Orientation}
   phase}\label{EXPP}
  \end{center}
 \end{minipage}\hfill
\end{figure}

\paragraph{Set-Up Phase.}
Starting from any towerless configuration, the {\tt Set-Up} phase ends
in a {\tt Set-Up} configuration, where there is a single line of
robots starting at a corner and along a longest borderline of the
grid.  In this phase, we distinguish three main configurations:
\begin{description}
\item[{\tt Leader}:] In such a configuration, there is exactly one
  robot located at a corner of the grid.
\item[\noindent {\tt Choice}:] In such a configuration, at least two
  robots are located at a corner of the grid. We choose one of them to
  remain at a corner. The other ones have to leave their corner.
\item[\noindent {\tt Undefined}:] In such a configuration, there is no
  robot at any corner of the grid. The idea is then to elect one robot
  that will move to join a corner of the grid.
\end{description}

In the following, we present the behavior of the three robots,
respectively referred to as $\mathcal R1, \mathcal R2$, and $\mathcal
R3$,\footnote{Recall that robots are anonymous, so these notations are
  used to ease the explanations only.} in each of the main
configurations. These configurations are declined into several
subconfigurations.
      \begin{enumerate}

      \item The configuration is of type {\tt Leader}: In such a
        configuration, there is exactly one robot that is at a corner
        of the grid. Let $\mathcal R1$ be this robot. We consider the
        following subcases:

        \noindent A) The configuration is of type {\tt Strict-Leader}:
        In such a configuration, there is no other robot on any
        borderline having the corner where $\mathcal R1$ is located as
        extremity. In this case, the robots that are the closest to
        $\mathcal R1$ are the ones allowed to move. Their destination
        is their adjacent free node on a shortest path towards the
        closest free node that is on a longest borderline having the
        corner where $\mathcal R1$ is located as extremity. (If there
        is several shortest paths, the adversary makes the choice.)

        \noindent B) The configuration is of type {\tt Half-Leader}:
        In such a configuration, among $\mathcal R2$ and $\mathcal
        R3$, only one robot, say $\mathcal R2$, is on a borderline
        having the corner where $\mathcal R1$ is located as
        extremity. Two subcases are possible:
                                \begin{itemize}
                                \item The configuration is of type
                                  {\tt Half-Leader1}: $\mathcal R2$ is
                                  on a longest borderline. In this
                                  case, the third robot $\mathcal R3$
                                  is the one allowed to move. Its
                                  destination is an adjacent free
                                  node towards a closest free node
                                  on the borderline that contains both
                                  $\mathcal R1$ and $\mathcal R2$.
                                  (If there is several shortest paths,
                                  the adversary makes the choice.)
                                \item The configuration is of type
                                  {\tt Half-Leader2}: $\mathcal R2$ is
                                  not on the longest borderline. In
                                  this case, $\mathcal R2$ is the one
                                  allowed to move, its destination is
                                  the adjacent free node outside the
                                  borderline, if any. In the case
                                  where there is no such a free node,
                                  $\mathcal R2$ moves to a free node
                                  on its own borderline (In case of
                                  symmetry, the adversary makes the
                                  choice.)
                                 \end{itemize}

                                 \noindent C) The configuration is of
                                 type {\tt All-Leader}: All the robots
                                 are on a borderline having the corner
                                 where $\mathcal R1$ is located as
                                 extremity. In this case, $\mathcal
                                 R2$ and $\mathcal R3$ are not
                                 necessary on the same
                                 borderline. Thus, we have two
                                 subcases:
                               \begin{itemize}
                               
                               \item The configuration is of type {\tt
                                   Fully-Leader}: In such a
                                 configuration, all the robots are on
                                 the same borderline, $D1$. The two
                                 following subcases are then possible:
                                 
                                 \noindent $(i)$ The configuration is
                                 of type {\tt Fully-Leader1}: In this
                                 case, $D1$ is a longest
                                 borderline. If the robots form a
                                 line, then the {\tt Set-Up}
                                 configuration is reached and the
                                 phase is done. Otherwise, let
                                 $\mathcal R2$ be the closest robot
                                 from $\mathcal R1$. If $\mathcal R1$
                                 and $\mathcal R2$ are not neighbors,
                                 then $\mathcal R2$ is the only
                                 allowed to move and its destination
                                 is the adjacent free node towards
                                 $\mathcal R1$. In the other case,
                                 $\mathcal R3$ is the only robot
                                 allowed to move and its destination
                                 is the adjacent free node towards
                                 $\mathcal R2$.
                                       
                                 \noindent $(ii)$ The configuration is
                                 of type {\tt Fully-Leader2}: In this
                                 case, $D1$ is not the longest
                                 borderline. Then, the robot among
                                 $\mathcal R2$ and $\mathcal R3$ that
                                 is the closest to $\mathcal R1$
                                 leaves the borderline by moving to
                                 its neighboring free node outside the
                                 borderline.
                               \item The configuration is of type {\tt
                                   Semi-Leader}: $\mathcal R2$ and
                                 $\mathcal R3$ are not on the same
                                 borderline. Two subcases are
                                 possible:

                                 \noindent $(i)$ The configuration is of
                                 type {\tt Semi-Leader1}: In this
                                 case, $i \ne j$. The unique robot
                                 among $\mathcal R2$ and $\mathcal R3$
                                 which is located on the smallest
                                 borderline moves to the adjacent free
                                 node outside its borderline.

                                 \noindent $(ii)$ The configuration is
                                 of type {\tt Semi-Leader2}: In this
                                 case, $i=j$. Let denote by
                                 $Dist(\mathcal R,\mathcal R')$ the
                                 {\em distance} (that is, the length
                                 of a shortest path) in the grid
                                 between the two nodes where $\mathcal
                                 R$ and $\mathcal R'$ are respectively
                                 located.  If $Dist(\mathcal
                                 R1,\mathcal R2) \ne Dist(\mathcal
                                 R1,\mathcal R3)$ then the robot among
                                 $\mathcal R2$ and $\mathcal R3$ that
                                 is the closest to $\mathcal R1$ is
                                 the only one allowed to move, its
                                 destination is the adjacent free node
                                 outside the borderline. Otherwise
                                 ($Dist(\mathcal R1,\mathcal R2)$ $=$
                                 $Dist(\mathcal R1,\mathcal R3)$),
                                 either (a) there is a free node
                                 between $\mathcal R1$ and $\mathcal
                                 R2$, or (b) $\mathcal R1$ is both
                                 neighbor of $\mathcal R2$ and
                                 $\mathcal R3$.  In case (a),
                                 $\mathcal R1$ is the only robot
                                 allowed to move and its destination
                                 is the adjacent free node towards one
                                 of its two borderlines (the adversary
                                 makes the choice). In case (b),
                                 $\mathcal R2$ and $\mathcal R3$ move
                                 and their destination is their
                                 adjacent free node on their
                                 borderline.
                               \end{itemize}                                         
                             \item The configuration is of type {\tt
                                 Choice}: At least two robots are
                               located at a corner. We consider two
                               cases:
               
                               \noindent A) The configuration is of
                               type {\tt Choice1}: In this
                               configuration, there are exactly two
                               robots that are located at a corner of
                               the grid. Let $\mathcal R1$ and
                               $\mathcal R2$ be these robots.
                               
                               \begin{itemize}
                               \item In the case where $\mathcal R3$
                                 is on the same borderline as either
                                 $\mathcal R1$ or $\mathcal R2$ but
                                 not both --- suppose $\mathcal R1$
                                 --- then $\mathcal R2$ is the one
                                 allowed to move, its destination is
                                 the adjacent free node towards the
                                 closest free node of the borderline
                                 that contains both $\mathcal R1$ and
                                 $\mathcal R3$.

                             \item In the case where the three robots
                               are on the same borderline. Then:

                               \noindent $(i)$ If $Dist(\mathcal
                               R1,\mathcal R3)$ $\neq$ $Dist(\mathcal
                               R2,\mathcal R3)$, then the robot among
                               $\mathcal R1$ and $\mathcal R2$ that is
                               farthest to $\mathcal R3$ moves to the
                               adjacent free node on the borderline
                               towards $\mathcal R3$.

                               \noindent $(ii)$ Otherwise
                               ($Dist(\mathcal R1,\mathcal R3)$ $=$
                               $Dist(\mathcal R2,\mathcal R3)$), and
                               $\mathcal R3$ has either or not an
                               adjacent free node on the
                               borderline. In the former case,
                               $\mathcal R3$ moves to an adjacent
                               free node on the borderline towards
                               either $\mathcal R1$ or $\mathcal R2$
                               (the adversary makes the choice). In
                               the latter case, $\mathcal R3$ moves to
                               its adjacent free node outside the
                               borderline.

                             \item If $\mathcal R3$ is not on any
                               borderline, it moves to an adjacent
                               free node on a shortest path towards
                               the closest free node that is on a
                               longest borderline that contains either
                               $\mathcal R1$ or $\mathcal R2$.  (In
                               case of symmetry, the adversary makes
                               the choice.)
                               \end{itemize}
                               \noindent B) The configuration is of
                               type {\tt Choice2}: In this
                               configuration, all the robots are
                               located at a corner. The robot allowed
                               to move is the one that is located at a
                               node that is common to the two
                               borderlines of the other robots. Let
                               $\mathcal R1$ be this robot. The
                               destination of $\mathcal R1$ is the
                               adjacent free node on a longest
                               borderline. (In case of symmetry, the
                               adversary makes the choice.)
                  
                             \item The configuration is of type {\tt
                                 Undefined}: In this configuration,
                               there is no robot that is located at
                               any corner. The cases below are then
                               possible:
                  
                               \noindent A) The configuration is of
                               type {\tt Undefined1}: In this case,
                               $i=j$ and there is one borderline that
                               contains two robots $\mathcal R1$ and
                               $\mathcal R2$ such that $\mathcal R1$
                               is closer from a corner than $\mathcal
                               R2$ and $\mathcal R3$. Let $D1$ be this
                               borderline. $\mathcal R3$ is the only
                               one allowed to move and its destination
                               is an adjacent free node on a shortest
                               path towards a closest free node of
                               $D1$.  (If there are several shortest
                               paths, the adversary makes the choice.)

                               \noindent B) The configuration is of
                               type {\tt Undefined2}: It is any
                               configuration different from {\tt
                                 Undefined1}, where there is exactly
                               one robot that is the closest to a
                               corner. In this case, this robot is the
                               only one allowed to move, its
                               destination is an adjacent free node
                               on a shortest path to a closest
                               corner. (If there are several
                               possibilities, the adversary makes the
                               choice.)

                               \noindent C) The configuration is of
                               type {\tt Undefined3}: There are
                               exactly two robots that are closest to
                               a corner. Let $\mathcal R1$ and
                               $\mathcal R2$ be these two robots.

                               \begin{itemize}
                               \item If $Dist(\mathcal R1,\mathcal R3)
                                 = Dist(\mathcal R2,\mathcal R3)$ then
                                 $\mathcal R3$ is the only one allowed
                                 to move, and either $Dist(\mathcal
                                 R1,$ $\mathcal R3)$ $=$ $1$ or
                                 $Dist(\mathcal R1,\mathcal R3) > 1$.
                                 In the former case, $\mathcal R3$
                                 moves to an adjacent free node. (If
                                 there are two possibilities, the
                                 adversary make the choice.) In the
                                 latter case, $\mathcal R3$ moves to
                                 an adjacent free node that is on a
                                 shortest path towards either
                                 $\mathcal R1$ or $\mathcal R2$ but
                                 not both.

                               \item If $Dist(\mathcal R1,\mathcal R3)
                                 \ne Dist(\mathcal R2,\mathcal R3)$
                                 then the robot among $\mathcal R1$
                                 and $\mathcal R2$ that is closest to
                                 $\mathcal R3$ is the only one allowed
                                 to move. Its destination is the
                                 adjacent free node that is on a
                                 shortest path to a closest
                                 corner. (If there are several
                                 possibilities, the adversary makes
                                 the choice.)
                               \end{itemize}
                  
                               \noindent D) The configuration is of
                               type {\tt Undefined4}: There are three
                               robots that are closest to a
                               corner. Again, four cases are possible:
                               \begin{itemize}
                               \item The configuration is of type {\tt
                                   Undefined4-1}: There is exactly one
                                 robot that is on a borderline. In
                                 this case, this robot is the only one
                                 allowed to move. Its destination is
                                 an adjacent free node that is on a
                                 shortest path to a closest
                                 corner. (In case of two shortest
                                 paths, the adversary breaks the
                                 symmetry in the first step.)
                               \item The configuration is of type {\tt
                                   Undefined4-2}: In such a
                                 configuration, there are exactly two
                                 robots on a borderline. Let $\mathcal
                                 R1$ and $\mathcal R2$ be these two
                                 robots. The robot allowed to move is
                                 $\mathcal R3$. Its destination is the
                                 adjacent free node towards a closest
                                 corner.  (The adversary may have to
                                 break the symmetry.)

                               \item The configuration is of type {\tt
                                   Undefined4-3}: The three robots are
                                 on borderlines of the grid. 

                                 \noindent $(i)$ If there are more than
                                 one robot on the same borderline. In
                                 this case, there are exactly two
                                 robots on the same borderline, and
                                 let $\mathcal R1$ and $\mathcal R2$
                                 be these robots. Then $\mathcal R3$
                                 is the only one allowed to move and
                                 its destination is an adjacent free
                                 node towards a closest corner. (The
                                 adversary may have to break the
                                 symmetry.)

                                 \noindent $(ii)$ If there is at most one
                                 robot on each borderline: Exactly one
                                 borderline is perpendicular to the
                                 two others. The robot on that
                                 borderline is the only one allowed to
                                 move and its destination is the
                                 adjacent node towards a closest
                                 corner.  (The adversary may have to
                                 break the symmetry.)

                               \item The configuration is of type {\tt
                                   Undefined4-4}: In this case, there
                                 is no robot on any borderline.  

                                 \noindent $(i)$ In the case where there
                                 are two robots, $\mathcal R1$ and
                                 $\mathcal R2$, that are closest to
                                 the same corner, and this corner is
                                 not a closest corner to $\mathcal
                                 R3$, then $\mathcal R3$ is the only
                                 robot allowed to move and its
                                 destination is an adjacent free node
                                 on a shortest path towards a closest
                                 corner. (If there are several
                                 possibilities, the adversary makes
                                 the choice.)

                                 \noindent $(ii)$ In the case where
                                 there are two robots, $\mathcal R1$
                                 and $\mathcal R2$, that are closest
                                 to corners $C1$ and $C2$,
                                 respectively, where $C1 \neq C2$, and
                                 $\mathcal R3$ is closest to both $C1$
                                 and $C2$, then $\mathcal R3$ is the
                                 only one allowed to move (refer to
                                 Figure \ref{Undef4-4}), and it moves
                                 toward $C1$ or $C2$ according to a
                                 choice of the adversary. 

                                 \noindent $(iii)$ In the case where all
                                 the robots are closest to different
                                 corners, there is one robot $\mathcal
                                 R1$ whom corner is between the two
                                 other targeted corners of $\mathcal
                                 R2$ and $\mathcal R3$. The robot
                                 allowed to move is $\mathcal R1$, its
                                 destination is an adjacent free node
                                 on a shortest path towards its
                                 closest corner. (If there are several
                                 shortest paths, the adversary makes
                                 the choice.)

                               \end{itemize}
                          \end{enumerate}

\begin{figure}
  \begin{center}
  \epsfig{figure=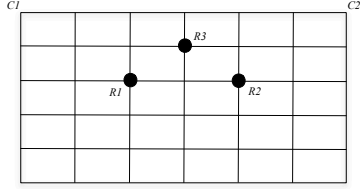,width=5cm}\\
  \caption{Sample of a configuration of type {\tt
      Undefined4-4}}\label{Undef4-4}
  \end{center}
 \end{figure}

 The correctness of the {\tt Set-Up} phase is established by Lemmas
 \ref{lem:notower} and \ref{lem:setup}.

\begin{lem}\label{lem:notower}
Starting from any arbitrary towerless configuration, {\tt Set-Up} phase does not create any tower.
\end{lem}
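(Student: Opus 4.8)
The plan is to prove the contrapositive of ``a tower appears'': since the initial configuration is towerless and a tower can only be created at the instant of a \emph{Move}, it suffices to show that no \emph{Move} ever puts two robots on a common node. This can happen in exactly two ways: (A) a robot moves onto a node that is already occupied, or (B) two robots move, at the same instant, onto a common currently-free node. I would first record the single fact that does all the elementary work: in \emph{every} subcase of the \texttt{Set-Up} rules (\texttt{Strict-Leader}, \texttt{Half-Leader1/2}, \texttt{Fully-Leader1/2}, \texttt{Semi-Leader1/2}, \texttt{Choice1/2}, \texttt{Undefined1}--\texttt{4}) the destination of every designated mover is explicitly a \emph{free} node (an ``adjacent free node'', a ``closest free node'', etc.). Hence, relative to the view on which a robot bases its decision, failure mode (A) never occurs.

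It then remains to rule out (A) and (B) under the actual, possibly adversarial, timing of CORDA. First I would dispose of the genuinely simultaneous collisions (B). Most subcases designate a single mover, so (B) is vacuous there; the only subcases allowing two robots to move together are \texttt{Strict-Leader} (the two robots equidistant and closest to $\mathcal R1$), \texttt{Semi-Leader2} case~(b) ($\mathcal R2$ and $\mathcal R3$ both neighbours of $\mathcal R1$), and the symmetric situations where the adversary may activate a designated symmetric pair at once. In each I would check, purely geometrically, that the two targets are distinct: in \texttt{Strict-Leader} the two movers sit in positions exchanged by the symmetry axis and step onto the two distinct sides of it; in \texttt{Semi-Leader2}(b) they step onto their own, distinct, borderlines. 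Thus distinct designated movers always have distinct free targets, and no two current robots ever share a target.

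The hard part is failure mode (A) under asynchrony, because in CORDA a robot performs \emph{Look} at some time $t'$ and its \emph{Move} only at a later time $t''$, so it may move according to an outdated view. The key property I would establish is \emph{target stability}: if in configuration $\gamma$ a robot $\mathcal R$ is designated to move to the free node $v$, then $v$ remains free in every configuration reachable from $\gamma$ by moves of the \emph{other} robots while $\mathcal R$ itself has not yet moved, and no such intermediate configuration designates $v$ as the target of any robot. Granting this, when $\mathcal R$ finally executes its (possibly stale) step, $v$ is still free, so (A) cannot occur either. Target stability would be proved by running down the same case hierarchy: for each configuration type one determines which other robots may legally move while $\mathcal R$ stays put, and verifies that none of their admissible destinations is $v$ and that their moves cannot transform the configuration into one that points some robot at $v$ before $\mathcal R$ leaves.

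The main obstacle is precisely this last step: it is where the asynchronous, stale-view semantics meets the large, branching case analysis, and the bulk of the proof is the bookkeeping needed to show, type by type, that an unmoved designated target can never be ``stolen'' by another robot's current or pending move. Once target stability is combined with the distinctness of simultaneous targets and with the ``destinations are free'' observation, neither failure mode (A) nor (B) can arise, so the \texttt{Set-Up} phase never creates a tower.
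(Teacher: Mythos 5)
Your overall strategy coincides with the paper's: the paper's proof also splits into ``one designated mover'' (where the prescribed destination is always a free adjacent node) versus ``two designated movers,'' and it identifies exactly the same two critical cases you do, namely \texttt{Strict-Leader} with two equidistant robots and \texttt{Semi-Leader2} with $Dist(\mathcal R1,\mathcal R2)=Dist(\mathcal R1,\mathcal R3)$ and no free node between them, arguing in both that the two targets are distinct free nodes. Your reformulation via the two failure modes (A)/(B) and a ``target stability'' property is a cleaner packaging of what the paper does ad hoc, and the property you isolate is indeed the right one for CORDA's stale-view semantics (in fact slightly stronger than needed: all that matters is that a pending target stays free until its owner moves, not that no other robot is ever pointed at it).

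The genuine gap is that you never prove target stability: you state it, note that it ``would be proved by running down the same case hierarchy,'' and declare that this bookkeeping is the bulk of the proof --- which it is. That deferred step is precisely the content of the lemma, and it is the part the paper actually executes for the cases where it is nontrivial. Concretely, for \texttt{Semi-Leader2} the paper traces the asynchronous interleaving in which the adversary activates only one of the two designated robots, say $\mathcal R2$: the resulting configuration is again of type \texttt{Semi-Leader2} but with $Dist(\mathcal R1,\mathcal R2)\neq Dist(\mathcal R1,\mathcal R3)$, the robot now designated is exactly $\mathcal R3$, i.e.\ the one still holding a (possibly outdated) pending move, and whether $\mathcal R3$ moves on its stale view or on a fresh one, its destination is a free node (leading either to the configuration where both moved or to a \texttt{Half-Leader1} configuration). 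To complete your argument you would have to carry out this kind of trace case by case --- in particular checking that after one robot of a designated pair moves, the new configuration never designates a \emph{third} robot whose move could occupy the pending target. As submitted, the proposal is a correct plan with the decisive verification missing, not a proof.
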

\begin{proof}
  It is clear that in the case where one robot is allowed to move, no
  tower is created because the robot always moves to an free adjacent
  node. Thus lets consider the cases in which there are at least two
  robots that are allowed to move:

\begin{itemize}

\item The configuration is of type {\tt Strict-Leader}: Suppose that
  the robot that is at the corner is $\mathcal R1$, and the two other ones
  (that are neither at a corner nor at the same borderline as $\mathcal R1$)
  are $\mathcal R2$ and $\mathcal R3$, respectively. $\mathcal R2$ and $\mathcal R3$ are allowed to move
  at the same time only in the case they are at the same distance from
  $\mathcal R1$. Since their destination is their adjacent free node on the
  shortest path towards the longest borderline that contains $\mathcal R1$, we
  are sure that the both will move to different free nodes. Thus no
  tower is created in this case.

\item The configuration is of type {\tt Semi-Leader2}: we consider the
  case in which $Dist(\mathcal R1, \mathcal R2) = Dist(\mathcal R1, \mathcal R3)$ such as there is
  no free node between $\mathcal R1$ and both $\mathcal R2$ and $\mathcal R3$ respectively. It
  is clear that if the adversary activates them at the same time no
  tower is created since they move to their adjacent free node on the
  borderline they belong to, in the opposite direction of $\mathcal R1$ (recall
  that they are in two different borderlines). In the case the
  adversary activates only one robot ($\mathcal R2$), no tower is created as
  well since it moves to its adjacent free node on the borderline it
  belongs to (note that is this case $i=j$). Note that the
  configuration reached remains of type Semi-leader2, however,
  $Dist(\mathcal R1, \mathcal R2) \ne Dist(\mathcal R1, \mathcal R3)$. Thus the robot that is
  allowed to move now is $\mathcal R3$, which is the one that was supposed to
  move at the first place. Thus either we retrieve the configuration
  in which both robots moved (this will happen in the case $\mathcal R3$ has an
  outdated view). Or the configuration reached is of type Half leader1
  and all the robots have a correct view.


\end{itemize}

From the cases above we can deduce that starting from any configuration that is towerless, {\tt Set-Up} phase does not create any tower and the lemma holds.
\end{proof}

Lemma \ref{lem:setup} is established using the following three
technical lemmas.

\begin{lem}\label{Leader}
  Starting from a configuration of type {\tt Leader}, a configuration
  of type {\tt Set-Up} is reached in a finite time.
\end{lem}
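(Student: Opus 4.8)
The plan is to proceed by an exhaustive case analysis over the subtypes of \texttt{Leader} enumerated above, and to exhibit a well-founded measure that does not increase and eventually strictly decreases at each robot move, so that the \texttt{Fully-Leader1} line configuration --- which is exactly the \texttt{Set-Up} configuration --- is reached after finitely many moves. First I would organize the subtypes along the progression they are designed to follow, namely \texttt{Strict-Leader} $\to$ \texttt{Half-Leader} $\to$ \texttt{All-Leader} $\to$ \texttt{Fully-Leader}, ending in a compacted line. Concretely, I would use a lexicographic potential $\Phi=(a,b,c)$ on configurations having exactly one corner robot $\mathcal R1$, where $a$ counts the robots lying on no borderline incident to $\mathcal R1$'s corner, $b$ counts the robots lying on a shortest incident borderline, and $c$ measures the total gap between the robots and the contiguous three-node line anchored at $\mathcal R1$ along a longest incident borderline. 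The minimal value of $\Phi$ characterizes precisely the \texttt{Set-Up} configuration.

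Then, for each subconfiguration I would verify two things, using the movement rules quoted above together with Lemma~\ref{lem:notower} (which guarantees that no tower is ever created, so multiplicities stay trivial and the view-based case distinctions remain valid): (i) that the prescribed move keeps $\mathcal R1$ at its corner whenever the rule does not explicitly move $\mathcal R1$, and (ii) that it strictly decreases $\Phi$. For instance, in \texttt{Strict-Leader} the movers head toward the longest incident borderline, decreasing $a$; in \texttt{Half-Leader2} and \texttt{Fully-Leader2} a robot leaves a shortest incident borderline, decreasing $b$; and in \texttt{Half-Leader1} together with the \texttt{Semi-Leader} and \texttt{Fully-Leader1} cases the robots are driven onto and then compacted along the longest incident borderline, decreasing $c$.

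The delicate points, where I expect the real work to lie, are twofold. The first is the fully symmetric square-grid case \texttt{Semi-Leader2}(a), in which the rule forces the corner robot $\mathcal R1$ itself to move in order to break a symmetry that no other robot can break; here $\mathcal R1$ leaves its corner, so I must argue that the resulting configuration does not cycle back to the same symmetric situation, but instead lands in a configuration of strictly smaller $\Phi$ (after re-anchoring the potential at the relevant corner) or in one handled by the companion lemmas invoked to establish Lemma~\ref{lem:setup}. The second obstacle is the asynchrony of CORDA: a robot may \emph{Move} according to an outdated view, so the configuration it reaches need not be the one the current configuration's rule prescribes. To handle this I would show that every prescribed destination is a fixed geometric target determined solely by $\mathcal R1$'s (stable) corner and the borderline structure, so that an outdated move either coincides with a still-valid move or is absorbed by a later correcting move; in all cases $\Phi$ does not increase and eventually strictly decreases. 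Combining the per-case monotonicity of $\Phi$ with its well-foundedness yields termination in the \texttt{Set-Up} configuration, which is the claim.
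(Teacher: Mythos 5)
Your overall strategy (exhaustive case analysis plus a well-founded progress measure) is a reasonable way to organize the argument, and you correctly identify the two delicate points. But there is a genuine gap precisely at the first delicate point, and neither of your proposed repairs closes it. In the case \texttt{Semi-Leader2} with $i=j$, $Dist(\mathcal R1,\mathcal R2)=Dist(\mathcal R1,\mathcal R3)$, and a free node between $\mathcal R1$ and the others, the algorithm's own rule moves $\mathcal R1$ off its corner. The resulting configuration has \emph{no} robot on any corner, so your potential $\Phi$ --- anchored at ``$\mathcal R1$'s corner'' and defined only on configurations with exactly one corner robot --- is simply undefined there: there is no ``relevant corner'' at which to re-anchor it. Your fallback, deferring to the companion lemmas used for Lemma~\ref{lem:setup}, is circular: Lemma~\ref{Undefined} only guarantees that from an \texttt{Undefined} configuration \emph{some} \texttt{Leader} configuration is eventually reached, with no control over which one; for all that lemma says, the system could return to the very same symmetric \texttt{Semi-Leader2} configuration, and your measure gives no way to exclude that cycle.

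The paper closes exactly this hole by tracing the excursion concretely: after $\mathcal R1$ steps onto a borderline the configuration is of type \texttt{Undefined1}; then $\mathcal R3$ moves one step toward the borderline $D1$ containing $\mathcal R1$ and $\mathcal R2$ (reaching type \texttt{Undefined3}); then $\mathcal R1$ returns to its corner, and the configuration is again of type \texttt{Semi-Leader2} but now with $Dist(\mathcal R1,\mathcal R2)\neq Dist(\mathcal R1,\mathcal R3)$ --- the symmetry has been irreversibly broken, so no cycle can occur, and the robots then gather on one borderline and compact into the line. This concrete trace, together with the analogous treatment of outdated-view interleavings in the case where $\mathcal R1$ is adjacent to both $\mathcal R2$ and $\mathcal R3$, is the actual content of the lemma, and it is precisely the part your proposal leaves as ``where the real work lies.'' Two smaller defects point the same way: your component $b$ (robots on a \emph{shortest} incident borderline) is ill-defined when $i=j$, which is exactly the delicate case since every borderline of a square grid is both shortest and longest; and your asynchrony argument rests on $\mathcal R1$'s corner being a stable geometric anchor, which fails in that same case because $\mathcal R1$ is the robot that moves.
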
   
\begin{proof}
  In a configuration of type {\tt Leader}, there is only one robot
  that is at the corner (suppose that this robot is
  $\mathcal R1$). It is easy to see that in the case $i\ne j$ all the robots
  will be on the longest borderline that contains $\mathcal R1$
  (refer to Strict Leader, HalfLeader1 configurations). Once the
  robots on the same longest borderline, it is also easy to create a
  line of robots keeping one robot at the corner. (The robot ($\mathcal R2$)
  that is the closest to $\mathcal R1$ moves first until it becomes neighbor of
  $\mathcal R1$. Once it is done, the remaining robot ($\mathcal R3$) moves to become
  neighbor of $\mathcal R2$.) Hence we are sure that a configuration of type
  {\tt Set-Up} is reached in a finite time. In the case $i=j$ when the
  robots move to the closest borderline that contains $\mathcal R1$ either we
  have the same result as when $i\ne j$ (all the robots will be on the
  same borderline) and hence we are sure to reach a configuration of
  type {\tt Set-Up}. Or, each robot $\mathcal R2$ and $\mathcal R3$ is on the same
  borderline as $\mathcal R1$, however both of them are on different
  borderlines. The sub-cases are then possible as follow:

\begin{enumerate}
\item $Dist (\mathcal R1,\mathcal R2) \ne Dist (\mathcal R1,\mathcal R3)$. In this case, the robot
  that is the closest to $\mathcal R1$ moves to its adjacent node outside its
  own borderline (Let this robot be $\mathcal R2$). Note that when it moves,
  its new destination is the closest free node on the same borderline
  as both $\mathcal R1$ and $\mathcal R3$ (see {\tt Semi-Leader2} configuration). Thus
  we are sure that $\mathcal R2$ will be on the same borderline of $\mathcal R1$ and
  $\mathcal R3$ in a finite time, thus we are sure that the {\tt Set-Up}
  configuration is reached in a finite time.
\item $Dist (\mathcal R1,\mathcal R2)=Dist (\mathcal R1,\mathcal R3)$. The two sub-case below are
  possible:

\begin{enumerate}
\item \label{EmpNode} There is an free node between $\mathcal R1$ and the
  other robots. $\mathcal R1$ is the one that will move, its destination is its
  adjacent free node on one of its two adjacent borderlines (Suppose that it moves towards $\mathcal R2$). Note
  that once it has moved, all the robots are in a borderline such as there
  is one borderline that contains two robots ($\mathcal R1$ and $\mathcal R2$), let $D1$ be this
  borderline (the configuration is of type {\tt Undefined1}). The
  robot allowed to move is $\mathcal R3$ (Note that $\mathcal R3$ is not part of $D1$), 
  its destination is its adjacent free node on a shortest path towards the closest free node of $D1$.  
  Once it moves, it becomes at the same distance as $\mathcal R1$ from a corner. The configuration becomes of type 
   {\tt Undefined3} such that $Dist (\mathcal R1,\mathcal R2)\ne Dist (\mathcal R1,\mathcal R3)$. $\mathcal R1$ is 
   the only one allowed to move, its destination is its adjacent empty node towards the corner. Once it moves, it joins
   one corner of the grid. The configuration becomes of type {\tt Semi-Leader2} such that $Dist (\mathcal R1,\mathcal R2)\ne Dist (\mathcal R1,\mathcal R3)$. $\mathcal R3$ is the only robot allowed to move, its
  destination is its adjacent free node outside the borderline it
  belongs to. Once it moves, its new destination will be the
  borderline that contains two robots. Thus, we are sure that all the
  robots will be part of the same borderline in a finite time. It is
  clear that from this configuration is easy to build a configuration
  of type {\tt Set-Up}. (Note that it is easy to break the symmetry,if
  any, since we have three robots.)

\item There is no free node between $\mathcal R1$ and the other robots $\mathcal R2$
  and $\mathcal R3$. In this case, $\mathcal R2$ and $\mathcal R3$ will be the ones allowed to
  move. Their destination is their adjacent free node on their
  borderline. In the case the adversary activates them at the same
  time, we retrieve case \ref{EmpNode}. If the adversary activates
  only one of the two robots, the configuration reached will be of
  type {\tt Semi-Leader2} such as $Dist (\mathcal R1,\mathcal R2) \ne Dist
  (\mathcal R1,\mathcal R3)$, thus, The robot that is the closest to $\mathcal R1$ is the one
  that is allowed to move. (Note that this robot is the one that was
  supposed to move at the first place.) If it has an outdated view it
  will move to its adjacent free node and we retrieve case
  \ref{EmpNode}. If not, it will move to its adjacent free node
  outside its borderline. When it does, its new destination is the closest
  free node on the same borderline of the two other robots. Note that
  when such a robot joins the new borderline, the configuration is of
  type {\tt Set-Up}.
\end{enumerate}
 \end{enumerate}
From the cases above, we can deduce that starting from a configuration of type {\tt Leader}, a configuration of type {\tt Set-Up} is reached in a finite time and the lemma holds.

\end{proof}

\begin{lem}\label{Choice}
Starting from a configuration of type {\tt Choice}, a configuration of type {\tt Leader} in reached in a finite time.
\end{lem}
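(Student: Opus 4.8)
The plan is to proceed by a case analysis following the subconfigurations of {\tt Choice}, exhibiting a progress measure that forces the number of robots located at corners to drop to exactly one. Throughout I rely on Lemma~\ref{lem:notower} to guarantee that no tower is created, so every configuration encountered stays towerless and $\mathcal R1$, $\mathcal R2$, $\mathcal R3$ remain on distinct nodes. Since the general solution of this subsection assumes $j>3$, the four corners are pairwise non-adjacent, so a robot that leaves a corner necessarily lands on a non-corner node; the invariant I maintain is that the robots initially sitting at corners stay there until exactly one of them is explicitly designated to move, and that first departure already yields a {\tt Leader} configuration.

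First I would dispatch the easy reduction. In a {\tt Choice2} configuration all three robots sit at corners, and the protocol designates the single robot $\mathcal R1$ at the node common to the two borderlines of the others to leave its corner; after this move only $\mathcal R2$ and $\mathcal R3$ remain at corners, \ie we reach a {\tt Choice1} configuration. As a unique robot is enabled, CORDA asynchrony plays no role here. It therefore suffices to treat {\tt Choice1}, where exactly $\mathcal R1$ and $\mathcal R2$ are at corners (note $\mathcal R3$ cannot be at a corner, else three robots would occupy corners, a {\tt Choice2} configuration).

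Next I would show that each subcase of {\tt Choice1} either reaches {\tt Leader} in one step or moves $\mathcal R3$ strictly closer to a target borderline. If $\mathcal R3$ lies on the borderline of exactly one of $\mathcal R1,\mathcal R2$ (say $\mathcal R1$), then $\mathcal R2$ is the unique mover and, leaving its corner, produces a configuration where $\mathcal R1$ is the only corner robot, hence {\tt Leader}. If the three robots are aligned on one borderline and $Dist(\mathcal R1,\mathcal R3)\neq Dist(\mathcal R2,\mathcal R3)$, the farther of $\mathcal R1,\mathcal R2$ leaves its corner and we again obtain {\tt Leader}. In the symmetric aligned case, $\mathcal R3$ moves: either it slides along the borderline (the adversary choosing a side), which breaks the distance symmetry and returns us to the previous case, and thus to {\tt Leader}; or, when no free borderline node is available, it leaves the borderline, reducing to the case where $\mathcal R3$ is off the borderlines. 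Finally, when $\mathcal R3$ is on no borderline, it is the unique mover and heads along a shortest path toward a longest borderline incident to $\mathcal R1$ or $\mathcal R2$; its distance to that borderline strictly decreases at each of its moves while $\mathcal R1$ and $\mathcal R2$ stay put, so after finitely many steps $\mathcal R3$ reaches such a borderline and we fall back into the first subcase, yielding {\tt Leader}.

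The main obstacle is arguing correctness under CORDA asynchrony, where a designated robot may act on an outdated view and the adversary breaks ties. I would handle this by observing that in every subcase only one robot is ever enabled while the corner robots are idle (in the symmetric aligned case, a single adversarial choice restores this situation), so no harmful concurrency can occur; an outdated computation by the enabled robot can only re-execute a move of the same kind, which is already covered by the case analysis. Combining the monotone decrease of $\mathcal R3$'s distance to its target with the finiteness of the grid and the one-step transitions established above, every execution starting from a {\tt Choice} configuration reaches a {\tt Leader} configuration in finite time, which proves the lemma.
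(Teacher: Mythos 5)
Your overall route is the same as the paper's: reduce {\tt Choice2} to {\tt Choice1} via the single designated mover, then follow the protocol's subcases of {\tt Choice1}, using a distance-decrease argument for $\mathcal R3$ and a single-enabled-robot observation to dismiss CORDA asynchrony. However, your proof rests on a claim that is false: ``since $j>3$, the four corners are pairwise non-adjacent.'' The general solution of this section covers all $(i,j)$-grids with $j>3$, including $(2,j)$-grids, and when $i=2$ the two corners on each short side \emph{are} adjacent. Consequently your invariant that a robot leaving a corner necessarily lands on a non-corner node --- and hence that ``the first departure already yields a {\tt Leader} configuration'' --- breaks down. Concretely, in a $(2,5)$-grid put $\mathcal R1$ at the top-left corner $A$, $\mathcal R3$ at the middle node of the top borderline, and $\mathcal R2$ at the bottom-right corner $D$. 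This is {\tt Choice1} with $\mathcal R3$ on a borderline of $\mathcal R1$ only, so the protocol moves $\mathcal R2$ toward the closest free node of the top borderline; that closest free node is the top-right corner $B$, which is adjacent to $D$, so $\mathcal R2$ steps \emph{onto a corner}. Two corners remain occupied: the resulting configuration is again {\tt Choice1} (now with all three robots aligned on the top borderline), not {\tt Leader}, contradicting your one-step claim.

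The lemma itself survives, because from that aligned configuration the three-robots-on-one-borderline analysis (which you do carry out) applies and leads to {\tt Leader}. But to repair your proof you must abandon the non-adjacency invariant, explicitly allow the transition from the ``$\mathcal R3$ on one borderline'' case into the ``all aligned'' case, and argue that no cycle among the subcases can arise. For what it is worth, the paper's own proof is careless at exactly the same spot (it asserts that once $\mathcal R2$ moves ``it leaves the corner and the configuration will be of type {\tt Leader}''), but it does not elevate the false general statement about corner adjacency into a cornerstone of the argument the way your write-up does; your version is therefore the one that fails on a stated step rather than on an omission.
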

\begin{proof}
  It is clear that in the case where all the robots are on one corner
  of the grid, the next configuration reached is of type {\tt Choice1}
  since there will be a single robot that will move (refer to
  Configuration of type {\tt Choice2}). Note that when the
  configuration is of type {\tt Choice1} the cases below are possible
  (Let the robots that are at the corner be $\mathcal R1$ and $\mathcal R2$
  respectively and the third robot be $\mathcal R3$):

\begin{enumerate}
\item $\mathcal R3$ is on the same borderline $D1$ as $\mathcal R1$ (Note that in this case $\mathcal R2$ is not on $D1$). In this case, $\mathcal R2$ is the one allowed to move. Note that once it moves, it leaves the corner and the configuration will be of type {\tt Leader} (refer to {\tt Choice1}, case (i)).
\item All the robots are on the same borderline $D1$. In this case, the robots $\mathcal R3$ will be used to elect one of the two robots at the corner (refer to {\tt Choice1} configuration case (ii)). If $Dist(\mathcal R1,\mathcal R3) \ne Dist(\mathcal R2,\mathcal R3)$ then the robot that is the farthest from $\mathcal R3$ leaves the corner, thus, the configuration will contain a single robot that is at one corner. Hence the configuration will be of type {\tt Leader} in a finite time. In the case $Dist(\mathcal R1,\mathcal R3)=Dist(\mathcal R2,\mathcal R3)$, ($a$) if there is at least one empty node between $\mathcal R1$ and $\mathcal R3$ then $\mathcal R3$ will be the one allowed to move on the borderline towards either $\mathcal R1$ or $\mathcal R2$ breaking the symmetry. Thus, we retrieve the case in which $Dist(\mathcal R1,\mathcal R3) \ne Dist(\mathcal R2,\mathcal R3)$.
($b$) In the case where there is no empty nodes between $\mathcal R1$ and $\mathcal R3$, then $\mathcal R3$ is the one allowed to move. Its destination is its adjacent node outside the borderline. Once it moves, it remains the only one allowed to move in the configuration reached. Its destination is its adjacent node on a shortest path towards the closest free node that is on a longest borderline that contains either $\mathcal R1$ or $\mathcal R2$ (the adversary makes the choice). Once it moves we retrieve the case in which $Dist(\mathcal R1,\mathcal R3) \ne Dist(\mathcal R2,\mathcal R3)$.
 Thus we are sure that a configuration of type {\tt Leader} is reached in a finite time.
\item $\mathcal R3$ is not on a borderline. In this case, $\mathcal R3$ is the one allowed to move. Its destination is its adjacent free node on a shortest path towards the closest longest borderline that contains either $\mathcal R1$ or $\mathcal R2$. Thus we are sure that one of the two cases described above will be reached (refer to {\tt Choice1} configuration, case (iii)).
\end{enumerate} 

From the cases above we can deduce that a configuration of type {\tt Leader} is reached in a finite time and the lemma holds.
\end{proof}

\begin{lem}\label{Undefined}
  Starting from a configuration of type {\tt Undefined}, a
  configuration of 
  type {\tt Leader} is
  reached in a finite time.
\end{lem}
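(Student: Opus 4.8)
The plan is to proceed exactly as in the proofs of Lemmas~\ref{Leader} and~\ref{Choice}, namely by a case analysis on the subconfigurations {\tt Undefined1}--{\tt Undefined4} of type {\tt Undefined}, showing that each designated move makes quantifiable progress towards placing a single robot on a corner. Two structural observations drive the argument. First, in \emph{every} subcase of {\tt Undefined} exactly one robot is allowed to move, and it moves to a free adjacent node; hence no tower is created and (consistently with Lemma~\ref{lem:notower}) the configuration stays towerless throughout. Second, since no robot sits on a corner in any {\tt Undefined} configuration, the first time the moving robot reaches a corner it is the \emph{only} robot there, so the resulting configuration is of type {\tt Leader}. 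Should asynchrony nonetheless allow a second robot to reach a corner (via a stale move), the configuration is of type {\tt Choice}, and Lemma~\ref{Choice} yields {\tt Leader} anyway. Thus it suffices to prove that a corner eventually becomes occupied.

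For the progress measure I would take the pair $(\sigma,m)$ ordered lexicographically, where $\sigma$ ranks the active subcase in a fixed order and $m$ is the distance from the designated robot to its target corner. In {\tt Undefined2} the unique closest robot advances one step along a shortest path towards its closest corner, so $m$ strictly decreases and, when $m$ reaches $0$, a {\tt Leader} configuration is obtained. The remaining subcases exist only to break the ties and symmetries that prevent a single ``closest'' robot from being well defined, and I would show that each of them funnels the configuration into {\tt Undefined2} (or directly produces a corner occupant) after finitely many moves: in {\tt Undefined3} the move of $\mathcal R3$ when $Dist(\mathcal R1,\mathcal R3)=Dist(\mathcal R2,\mathcal R3)$, or of the closer of $\mathcal R1,\mathcal R2$ otherwise, turns the equidistant case into one with a unique closest robot; in the {\tt Undefined4} family the moving robot is pushed towards a corner along or onto a borderline, successively reducing the number of robots simultaneously closest to a corner; and the $i=j$ case {\tt Undefined1} is handled by routing $\mathcal R3$ onto the borderline $D1$, after which the analysis reduces to a situation already covered. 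Chaining these reductions, the pair $(\sigma,m)$ decreases in a well-founded order, so a corner is occupied in finite time.

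The main obstacle, as in the companion lemmas, is the combination of asynchrony (CORDA) and adversarial symmetry breaking. Because the CORDA scheduler may let the moving robot Look at one instant and Move much later, a robot can act on an \emph{outdated} view; the saving grace is that a single robot moves in each {\tt Undefined} subcase, so a stale move can at worst execute a step that was correct when it was computed, and such a step still advances that robot towards a corner (or towards $D1$), never increasing the measure. The genuinely delicate points are the perfectly symmetric configurations, where the adversary chooses both the target corner and, on a symmetric axis, the actual edge taken: here one must check that every adversarial choice still strictly decreases the measure (or advances the reduction between subcases) and cannot be exploited to oscillate the system forever. I expect the bulk of the work --- and the real risk of error --- to lie in the exhaustive verification of the {\tt Undefined4} subcases (the configurations with three robots all tied for closest to a corner, including the perpendicular-borderline and two-targeted-corners situations of Figure~\ref{Undef4-4}) and in the $i=j$ case {\tt Undefined1}, where symmetry is strongest and the bookkeeping of which robot is closest to which corner is most fragile.
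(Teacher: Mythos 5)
Your proposal is correct and takes essentially the same route as the paper's proof: a case analysis over the subconfigurations {\tt Undefined1}--{\tt Undefined4} in which {\tt Undefined2} lets the unique closest robot walk step by step to a corner, and every other subcase (tie-breaking via $\mathcal R3$ in {\tt Undefined3}, the borderline/corner reductions in the {\tt Undefined4} family, and routing onto $D1$ in {\tt Undefined1}) funnels in finitely many moves into a configuration with a unique closest robot, so that the first corner arrival produces a {\tt Leader} configuration. Your explicit lexicographic measure and the fallback through Lemma~\ref{Choice} are harmless additions that merely formalize the progress argument the paper leaves implicit.
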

\begin{proof}
It is clear that in the case where the configuration is of type {\tt Undefined2}, we are sure to reach a configuration of type {\tt Leader} in a finite time, since there is only one robot that is the closest to one corner (this robot will move until it reaches the closest corner). It is also clear that in the case where the configuration is of type {\tt Undefined1}, either a configuration of type {\tt Undefined2} is reached and hence a configuration of type {\tt Leader} is eventually reached or a configuration where there are two robots that are both the closest to a corner is reached, this case is part of the cases below:

\begin{enumerate}
\item There are exactly two robots that are the closest to one corner (let these two robots be $\mathcal R1$ and $\mathcal R2$ respectively). In this case, $\mathcal R3$ will be used to break the symmetry: In the case $Dist(\mathcal R1,\mathcal R3)=Dist(\mathcal R2,\mathcal R3)$, $\mathcal R3$ will be the one allowed to move, it destination is its adjacent node towards either ($a$) $\mathcal R1$ or $\mathcal R2$ if $Dist(\mathcal R1,\mathcal R3)>1$. Or ($b$) its adjacent free node from which its distance to $\mathcal R1$ will be different from its distance to $\mathcal R2$. In both cases ($a$ and $b$), we reach a configuration where $Dist(\mathcal R1,\mathcal R3) \ne Dist(\mathcal R2,\mathcal R3)$. In the case $Dist(\mathcal R1,\mathcal R3) \ne Dist(\mathcal R2,\mathcal R3)$, the robot that is the closest to $\mathcal R3$ will be the one allowed to move, its destination is its adjacent free node on a shortest path towards the corner. Note that once it has moved, either it reaches the corner or it becomes the closest one. Thus we are sure that a configuration of type {\tt Leader} is reached in a finite time. 
\item All the robots are the closest to a corner. If the
  configuration is of type {\tt Undefined4-1}, then there will be one
  robot that will be allowed to move (the one that is on a
  borderline), once it has moved, it becomes the closest to one corner
  of the grid, thus we are sure to reach a configuration of type {\tt
    Leader} in a finite time. In the case there are two robots at a
  borderline, The third robot (which is not on a borderline) is the
  one that will move becoming the closest robot to one corner of the
  grid. Thus in this case too, we are sure to reach a configuration of
  type {\tt Leader}. In the case all the robots are on a borderline
  then, i) if there is more than one robot on the same borderline
  (note that in this case the borderline contains two robots), the
  robot that is not part of the borderline moves towards the
  closest corner becoming the closest one, thus we are sure that a
  configuration of type {\tt Leader} is reached in a finite time. In
  the case there is one robot at each borderline, then one robot is
  easily elected to move becoming the closest to one corner of the
  grid. Thus, in this case too we are sure to reach a configuration of
  type {\tt Leader} in a finite time. In the case there is no robot on
  the borderline. If there are two robots that are the closest to the
  same corner such as the third robot is the only closest robot to
  another corner then this robot is the one allowed to move, when it
  does it becomes the only one that is the closest to one corner of
  the grid. Thus we are sure to reach a configuration of type {\tt
    Leader}. In the case there is one robot ($\mathcal R3$) that is the closest
  to both corners $C1$ and $C2$ such as $\mathcal R1$ and $\mathcal R2$ are also the
  closest to $C1$ $C2$ respectively, then $\mathcal R3$ is the one allowed to
  move towards one of the closest corner. Note that once it has moved,
  it becomes the closest one and hence we are sure that a
  configuration of type {\tt Leader} is reached in a finite time. In
  the case all the robots are the closest to different corner, we are
  sure that one of them is the closest one to one corner that is
  between the two other target corners (the closest to the other
  robots). This robot is the one allowed to move, its destination is
  its adjacent free node towards the closest corner. Note that one it
  moves it becomes either even closer (and hence it will be the only
  one that can move) or it will reach the corner. In both cases we are
  sure that a configuration of type {\tt Leader} is reached.
\end{enumerate} 

From the cases above we can deduce that starting from a configuration of type {\tt Undefined}, a configuration of type {\tt Leader} is reached in a finite time and the lemma holds.

\end{proof}

\begin{lem}\label{lem:setup}
Starting from any towerless configuration, a configuration of type {\tt Set-Up} is reached in a finite time. 
\end{lem}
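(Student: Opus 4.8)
The plan is to prove Lemma \ref{lem:setup} by case analysis on the type of the starting towerless configuration, then invoke the three technical lemmas (Lemmas \ref{Leader}, \ref{Choice}, \ref{Undefined}) already established, together with the no-tower guarantee of Lemma \ref{lem:notower}. The key observation is that the three configuration types {\tt Leader}, {\tt Choice}, and {\tt Undefined} are exhaustive and mutually exclusive: they are distinguished solely by how many robots sit at corners of the grid --- exactly one, at least two, or none, respectively. Since any towerless configuration places the three robots on distinct nodes, the number of robots located at corners is a well-defined integer in $\{0,1,2,3\}$, so every towerless configuration falls into exactly one of these three types.

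Let me outline the steps. First I would argue exhaustiveness: given an arbitrary towerless configuration $\gamma$, either $\gamma$ is already of type {\tt Set-Up} (in which case the phase is trivially done and there is nothing to prove), or it is not. If it is not a {\tt Set-Up} configuration, then classify it as {\tt Leader}, {\tt Choice}, or {\tt Undefined} according to the corner-occupancy count described above. Second, I would dispatch each case using the technical lemmas already at our disposal:
\begin{itemize}
\item If $\gamma$ is of type {\tt Undefined}, then by Lemma \ref{Undefined} a configuration of type {\tt Leader} is reached in finite time, and then by Lemma \ref{Leader} a {\tt Set-Up} configuration is reached in finite time.
\item If $\gamma$ is of type {\tt Choice}, then by Lemma \ref{Choice} a configuration of type {\tt Leader} is reached in finite time, and again Lemma \ref{Leader} yields a {\tt Set-Up} configuration.
\item If $\gamma$ is of type {\tt Leader}, then Lemma \ref{Leader} directly gives a {\tt Set-Up} configuration in finite time.
\end{itemize}
In all three cases the composition of two (or one) finite-time reductions is still finite, so a {\tt Set-Up} configuration is reached in finite time. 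Throughout, Lemma \ref{lem:notower} guarantees that no tower is ever created, so the execution stays within the towerless regime on which the classification and the three technical lemmas are predicated; this is what licenses chaining the lemmas together without worrying that an intermediate multiplicity point invalidates the type analysis.

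The main obstacle I anticipate is not any single deep argument but rather verifying that the reductions genuinely \emph{compose} under the asynchronous CORDA adversary. Each technical lemma asserts that a target type is reached in finite time, but one must ensure the intermediate configurations remain classifiable and that the adversary cannot cycle indefinitely by exploiting outdated views near type boundaries (recall that in CORDA a robot may act on a perception that predates its move). Concretely, I would need to confirm that once Lemma \ref{Undefined} or Lemma \ref{Choice} delivers a {\tt Leader} configuration, the system does not subsequently leave the {\tt Leader} regime in a way that re-triggers {\tt Choice} or {\tt Undefined} behavior --- i.e., that the progress measure is monotone across the hand-off. This is exactly the kind of boundary reasoning the proofs of Lemmas \ref{Leader} and \ref{Choice} already handle (they explicitly track outdated-view subcases and argue the configuration type is preserved or improved), so the composition argument here should reduce to a clean appeal to those lemmas rather than a fresh case explosion.
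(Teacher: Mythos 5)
Your proposal is correct and takes essentially the same route as the paper: the paper's own proof is a one-line composition of Lemmas~\ref{Leader}, \ref{Choice}, and~\ref{Undefined}, exactly the chaining you describe. Your additional care about exhaustiveness of the \texttt{Leader}/\texttt{Choice}/\texttt{Undefined} classification and about the towerless invariant (Lemma~\ref{lem:notower}) only makes explicit what the paper leaves implicit.
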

\begin{proof}
From Lemma \ref{Leader}, \ref{Choice} and \ref{Undefined} we can deduce that starting from any arbitrary towerless configuration that does not contain a line of robots on the longest line of the grid, a configuration of type {\tt Set-Up} is reached in a finite time and the lemma holds. 
\end{proof}

\paragraph{Orientation Phase.}

In this phase, an orientation of the grid is determined in the
following manner: The starting configuration contains a line of robots
on one of the longest borderline (of length greater than 3) starting
at one of its corner. The robot which is at the corner is the one
allowed to move, its destination is its adjacent occupied node. Once
it has moved, a tower is created. Then, we can determine a
coordination system where each node has unique coordinates, see Figure
\ref{EXPP}, page \pageref{EXPP}. The node with coordinates $(0,0)$ is
the unique corner that is the closest to the tower. The X-axis is
given by the vector linking the node $(0,0)$ to the node
where the tower is located. The Y-axis is given by the vector linking
the node $(0,0)$ to its neighboring node that does not contain the
tower.

The following lemma is straightforward:

\begin{lem}\label{lem:oriented}
  Starting from a configuration of type {\tt Set-Up}, a configuration
  of type {\tt Oriented} is reached in one step.
\end{lem}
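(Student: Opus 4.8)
The plan is to verify that the single prescribed move transforms a {\tt Set-Up} configuration directly into an {\tt Oriented} configuration, and to check that the resulting coordinate system is well-defined. Recall that in a {\tt Set-Up} configuration there is a single line of three robots occupying three consecutive nodes, starting at a corner and running along one of the longest borderlines (which, in the general solution, has length $j>3$). First I would observe that the configuration is asymmetric enough to distinguish the corner robot from the other two: the robot sitting exactly at the corner is the unique robot of degree-$2$ position that lies at the extremity of the line, so the protocol can single it out unambiguously. The prescribed rule is that this corner robot moves to its adjacent occupied node (the second robot of the line). Since both its source and its target are already known to it from its view, and the target is occupied, this move is well-defined and creates a tower of exactly two robots at the second node of the line.

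Next I would argue that exactly one step suffices and that no tower existed before. By Lemma~\ref{lem:notower}, the {\tt Set-Up} phase produces no tower, so the starting {\tt Set-Up} configuration is towerless; hence the tower of two robots created by this move is the first tower to appear, and it distinguishes the {\tt Oriented} configuration from every {\tt Set-Up} configuration. I would then check that the coordinate system described in the paragraph preceding the lemma is genuinely determined by this configuration: the tower occupies a unique node, the corner closest to the tower is unique (because the tower sits on a longest borderline adjacent to exactly one corner along that line), and from that corner the X-axis points toward the tower while the Y-axis points to the remaining neighbor. Because $i\le j$ and the line lies on a \emph{longest} borderline, the two axes are distinguishable (the grid side along X is at least as long as along Y, breaking the potential square ambiguity except possibly when $i=j$, which I would treat by noting the third robot's position fixes the orientation), so every node receives unique coordinates.

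The only subtlety I anticipate is asynchrony: in CORDA a robot may act on an outdated view. Here, however, only the corner robot is ever enabled to move from a {\tt Set-Up} configuration, and its single possible move (to the adjacent occupied node) is fixed regardless of when it computes; no other robot sees itself as enabled, so no stale-view interleaving can produce a different outcome. I would conclude that from any {\tt Set-Up} configuration the system reaches an {\tt Oriented} configuration in exactly one step, which is the claim. I expect the coordinate-uniqueness check in the $i=j$ case to be the only place requiring care, and it is dispatched by using the position of the third (non-tower) robot on the line to orient the grid.
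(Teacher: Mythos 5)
Your proof is correct and takes essentially the same route as the paper, which simply declares this lemma straightforward after the description of the {\tt Orientation} phase: the corner robot's move to its adjacent occupied node creates the first tower (the configuration being towerless by Lemma~\ref{lem:notower}), yielding an {\tt Oriented} configuration in exactly one step. Your worry about the $i=j$ case is unnecessary --- the X-axis is by definition the direction from the corner $(0,0)$ toward the tower, so no square ambiguity can arise --- but this does not affect correctness.
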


\paragraph{Exploration Phase.}
 \begin{wrapfigure}{r}{0.3\textwidth}
   \centering
   \includegraphics[width=3.5cm]{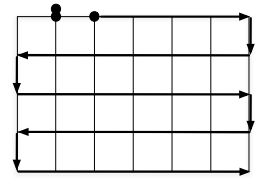}
   \caption{Exploration phase}\label{Exploration}
 \end{wrapfigure} 

 This phase starts from an {\tt Oriented} configuration.  Note that,
 once this configuration is reached, nodes of coordinates $(0,0)$,
 $(0,1)$, and $(0,2)$ have been necessarily visited. Then, the goal
 is to visit all other nodes.  To ensure that the exploration phase
 remains distinct from the previous phases and keep the coordinate
 system, we only authorize the robot that is single on a node to
 move. This robot is called the {\em explorer}.

 To explore all remaining nodes, the explorer should order all
 coordinates in such a way that (a) $(0,0)$ and $(0,1)$ are before its
 initial position (that is $(0,2)$) and all other coordinates are
 after; and (b) for all non-maximum coordinates $(x,y)$, if $(x',y')$
 is successor of $(x,y)$ in the order, then the nodes of
 coordinates $(x,y)$ and $(x',y')$ are neighbors. Such an order can be
 defined as follows:
$$(a,b) \preceq (c,d) \equiv b<d \vee [b=d \wedge ((a=c) \vee (b\bmod 2 = 0 \wedge a<c) \vee (b\bmod 2 = 1 \wedge a>c)]$$

Using the order $\preceq$, the explorer moves as follows: While the
explorer is not located at the node having the maximum coordinates
according to $\preceq$, the explorer moves to the neighbor whose
coordinates are successors of the coordinates of its current
position, as described in Figure~\ref{Exploration}.

The following lemma is straightforward:

\begin{lem}\label{lem:explore}
  The {\tt Exploration} phase terminates in finite time and once terminated all nodes have been visited.
\end{lem}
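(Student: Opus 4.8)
The plan is to show that the behaviour prescribed for the explorer realizes a \emph{boustrophedon} (snake-like) traversal of the grid induced by the total order $\preceq$, and that this traversal is monotone, neighbour-respecting, collision-free, and exhaustive. Concretely, I would reduce the lemma to three facts about $\preceq$ and then invoke finiteness. First, I would check that $\preceq$ is a total order on the set $\{(a,b)\}$ of coordinates occurring in the grid: reflexivity and totality are immediate from the trichotomy on the leading coordinates and, within a common line, on the trailing coordinates, while antisymmetry and transitivity follow by a short case analysis on whether the leading coordinates agree and, when they do, on the parity of the common value. Since the grid has finitely many nodes, $\preceq$ has a well-defined maximum, which is exactly the node at which the explorer is instructed to halt.

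Second, I would verify the two properties (a) and (b) that the order was designed to satisfy, since these are what make the explorer's rule well-posed. Property~(a) --- that the initial position of the explorer is preceded in $\preceq$ only by the two already-visited neighbouring nodes, and is followed by every other coordinate --- is a direct consequence of the definition: the at-most-two coordinates sharing the relevant leading coordinate with the start and lying on its near side are precisely the nodes occupied during the {\tt Set-Up} and {\tt Orientation} phases, so by Lemma~\ref{lem:oriented} they are already visited when the {\tt Exploration} phase begins. Property~(b) --- that the successor of any non-maximal coordinate is a neighbouring node --- is the crux of the argument and the step I expect to be the most delicate. Within a single line the successor differs by one unit in a single coordinate, hence is an edge-neighbour; the real work is at the turns, where one must use the parity alternation built into $\preceq$ to show that the last node of one line and the first node of the next share the other coordinate and therefore remain adjacent. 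This is exactly the ``snake reversal'' that keeps the path connected.

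Third, granting (a) and (b), I would argue termination and coverage together. At each activation the explorer is the unique single robot, so by the protocol it is the only robot allowed to move; the tower is never single and hence stays fixed, which keeps the coordinate system of Lemma~\ref{lem:oriented} stable throughout the phase. By~(b) the explorer always has a well-defined free neighbour to step to, and its position strictly increases in $\preceq$ at every move; since the node set is finite this can happen only finitely often, so the explorer reaches the $\preceq$-maximum and stops, giving termination in finite time. Because the explorer's target is always a $\preceq$-successor of its current node, it never steps backward onto the tower, so no spurious tower is created and the traversal enumerates exactly the final segment of $\preceq$ from the start to the maximum; together with the two predecessors covered in the earlier phases, every coordinate of the grid is listed, hence every node is visited. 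Finally, I would note that asynchrony in CORDA causes no difficulty here: only one robot ever moves, so Look phases always return the true configuration and no outdated view can mislead the explorer, and the argument above goes through verbatim. The main obstacle thus remains the neighbour check at the line reversals in property~(b).
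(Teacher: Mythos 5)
Your proof is correct and matches the paper's intent exactly: the paper states this lemma without proof (it is introduced as ``straightforward''), and your argument --- totality of $\preceq$, verification of properties (a) and (b) including the parity check at the line reversals, strict $\preceq$-increase of the explorer's position combined with finiteness of the grid, and the observation that only the single robot ever moves so the tower and coordinate system stay fixed and no outdated views arise --- is precisely the justification that the paper's construction of $\preceq$ is designed to support. No gap; your write-up simply makes explicit what the paper leaves implicit.
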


By Lemmas \ref{lem:notower}-\ref{lem:explore}, follows:

\begin{theo}The deterministic exploration of any $(i,j)$-Grid with
  $j>3$ can be solved in CORDA using 3 oblivious robots and the three
  phases {\tt Set-Up}, {\tt Orientation}, and {\tt Exploration}.
\end{theo}

\subsection{Exploring a (2,3)-Grid}\label{2-3}

The idea of the solution for the $(2,3)$-Grid is rather simple.
Consider the two longest borderlines of the grid.  Since there are
initially three isolated robots on the grid, there exists one of the
two longest borderlines, say $D$, that contains either all the robots
or exactly two robots. In the second case, the robot that is not part
of $D$ moves to the adjacent free node on the shortest path towards
the free node of $D$. Thus, the three robots are eventually located on
$D$.  Next, the robot not located on any corner moves to one of its
two neighboring occupied nodes (the destination is chosen by the
adversary).  Thus, a tower is created.  Once the tower is created, the
grid is oriented.  Then, the single robot moves to the adjacent free
node in the longest borderline that does not contain any tower.  Next,
it explores the nodes of this line by moving in towards the tower.
When it becomes neighbor of the tower, all the nodes of the
$(2,3)$-Grid have been explored.

The following theorem is straightforward.

\begin{theo}
  The deterministic exploration of a $(2,3)$-Grid can be solved in
  CORDA using 3 oblivious robots.
\end{theo}

\section{Deterministic solution for a (3,3)-grid using five robots}\label{sec:5R}

\begin{figure}[htb]
  \begin{center}
  \epsfig{figure=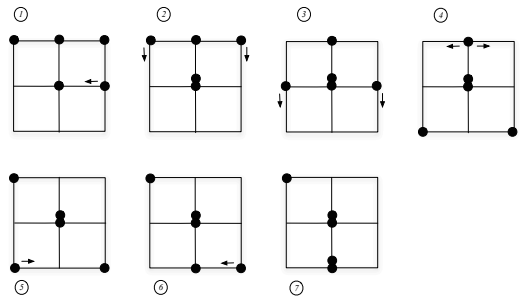,width=7cm}\\
  \caption{Exploration task on grids $(3,3)$}\label{EXP5-3C1}
  \end{center}
\end{figure}

In this section, we propose an algorithm that explores using five
robots the $(3,3)$-Grid, in CORDA and assuming weak multiplicity
detection.  The algorithm works in two phases, the {\tt Exploration}
phase and the {\tt Preparation} phase.  Figures~\ref{EXP5-3C1}
and~\ref{EXP5-3C2} depict the {\tt Exploration} phase.

The {\tt Exploration} phase starts from any of the three special
configurations shown in Figure~\ref{EXP5-3C1}-Case~$(1)$,
Figure~\ref{EXP5-3C2}-Case$(1a$), and
Figure~\ref{EXP5-3C2}-Case$(1b$), respectively.  In the former case,
the unique robot that is (1) on a borderline, (2) not at a corner, and
(3) not on the borderline linking the two occupied corners, moves
toward the center.  In Case~$(1a)$ of Figure~\ref{EXP5-3C2}, the
unique robot located at a corner moves toward one of its neighbors
(chosen by the adversary).  Similarly, in Case~$(1b)$ in
Figure~\ref{EXP5-3C2}, the robot located at the center moves toward
one of its neighbors.  In the three cases, one tower is created and
the system reaches Case~$2$ of either Figure~\ref{EXP5-3C1} or
Figure~\ref{EXP5-3C2}, depending on the initial configuration.  Next,
the exploration is made following the moves depicted in either
Figure~\ref{EXP5-3C1} or Figure~\ref{EXP5-3C2}, respectively.

\begin{figure}[H]
 \begin{center}
 \epsfig{figure=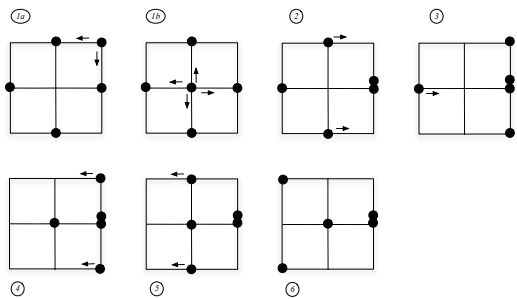,width=7cm}\\
  \caption{Special Exploration of grids $(3,3)$}\label{EXP5-3C2}
  \end{center}
\end{figure}

The {\tt Preparation} phase starts from any towerless configuration
that is not one of the three initial configurations of the exploration
phase. The {\tt Preparation} phase aims at reaching one of these three
configurations.  The detailed algorithm of this phase is left as an
exercise for the reader --- a solution is given in the appendix.

\begin{theo}\label{theo5}
  The deterministic exploration of a $(3,3)$-Grid can be solved in 
  CORDA using 5 oblivious robots.
\end{theo}

\section{Conclusion}
\label{conclu}

We presented necessary and sufficient conditions to explore a
grid-shaped network with a team of $k$ asynchronous oblivious
robots. Our results show that, perhaps surprisingly, exploring a grid
is easier than exploring a ring. In the ring, deterministic
(respectively, probabilistic) solutions essentially require five
(resp., four) robots. In the grid, three robots are necessary (even in
the probabilistic case) and sufficient (even in the deterministic
case) in the general case, while particular instances of the grid do
require four or five robots.  Note that the general algorithm given in
that paper requires exactly three robots.  It is worth investigating
whether exploration of a grid of $n$ nodes can be achieved using any
number $k$ ($3> k \geq n-1$) of robots, in particular when $k$ is
even.




\bibliographystyle{plain} 
\bibliography{biblio}

\pagebreak

\appendix

\setcounter{page}{1}
\pagenumbering{roman}
\pagestyle{plain}
\theoremnumbering{roman}

\def\thelemma{\Roman{lemma}}
\setcounter{lemma}{0}
\def\thedefinition{\Roman{definition}}
\setcounter{definition}{0}
\def\thecorollary{\Roman{corollary}}
\setcounter{corollary}{0}
\def\thetheorem{\Roman{theorem}}
\setcounter{theorem}{0}
\def\theremark{\Roman{remark}}
\setcounter{remark}{0}
\def\theproperty{\Roman{property}}
\setcounter{property}{0}
\def\thehypothesis{\Roman{hypothesis}}
\setcounter{hypothesis}{0}
\def\theproposition{\Roman{proposition}}
\setcounter{proposition}{0}
\def\thespecification{\Roman{specification}}
\setcounter{specification}{0}

\section{Preparation phase of the algorithm working with $5$ robots in the $(3,3)$-Grid}

The aim of the {\tt Preparation} phase is to reach one of the special
configurations, where the {\tt Exploration} phase can start.  It
starts from an arbitrary towerless configuration that is not one of the
three initial configurations shown in either Figure~\ref{EXP5-3C1} or
Figure~\ref{EXP5-3C2}.

 \begin{figure}[hbp] 
 \begin{minipage}[b]{.46\linewidth}
  \begin{center}
  \epsfig{figure=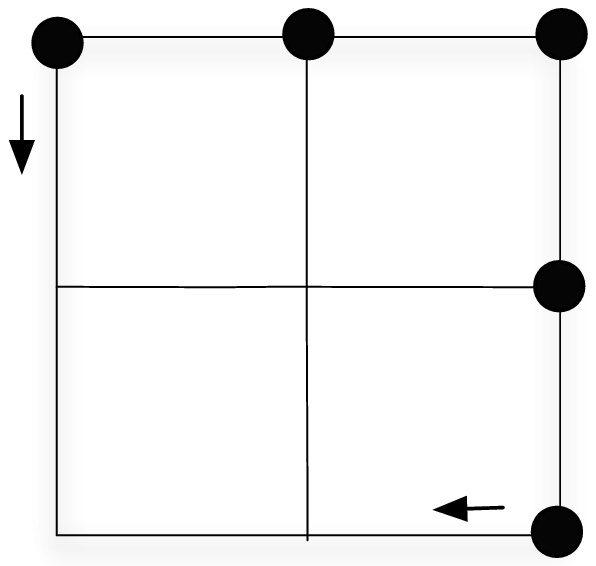,width=2cm}\\
  \caption{Configuration $(3,1,1)$}\label{PerfectSym}
  \end{center}
 \end{minipage} \hfill
 \begin{minipage}[b]{.46\linewidth}
  \begin{center}
  \epsfig{figure=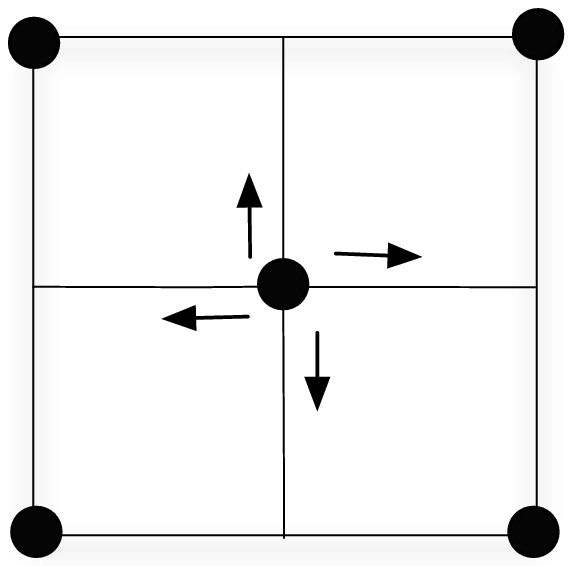,width=2cm}\\
  \caption{Instance of a configuration $(2,1,2)$}\label{Sym212}
  \end{center}
 \end{minipage} \hfill
 \begin{minipage}[b]{.46\linewidth}
  \begin{center}
  \epsfig{figure=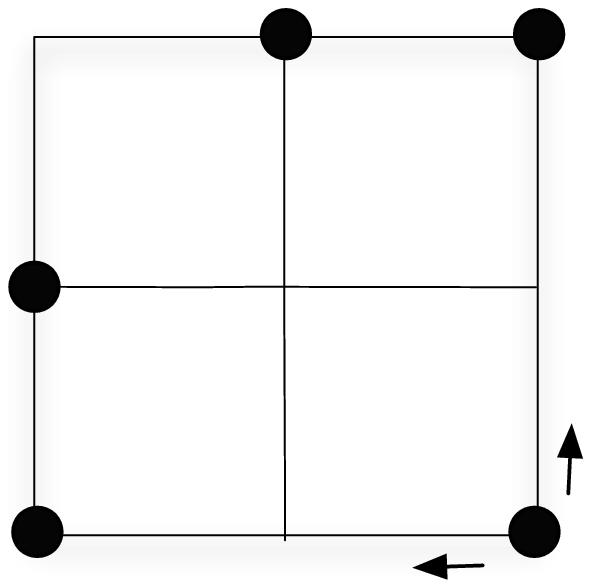,width=2cm}\\
  \caption{Instance of a configuration $(2,1,2)$}\label{2-1-2(1)}
  \end{center}
 \end{minipage} \hfill
 \begin{minipage}[b]{.46\linewidth}
  \begin{center}
  \epsfig{figure=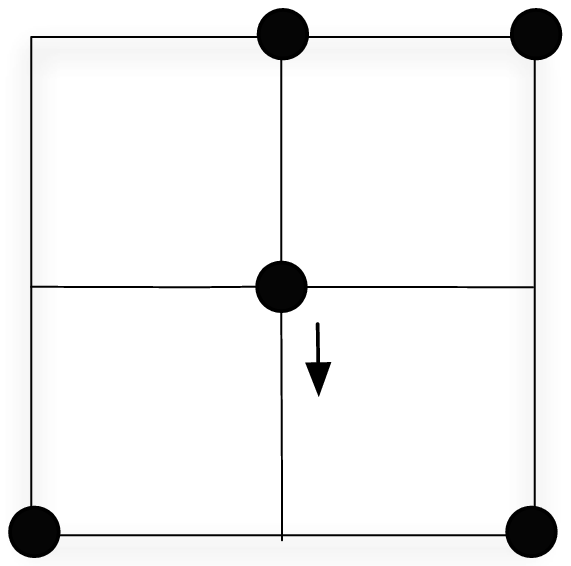,width=2cm}\\
  \caption{Instance of a configuration $(2,1,2)$}\label{2-1-2(2)}
  \end{center}
 \end{minipage} \hfill
\end{figure}

Let us define some terms that will be used later: let the
interdistance $d$ be the minimal distance among distances between each
pair of robots. We call a d.block a sequence of consecutive robots
that are at distance $d$. The size of an 1.block is the number of
robots it contains. We refer to a configuration by a set of three
values $(X1,X2,X3)$ such as $Xi$ represents the number of robots on
the line $i$. Note that X1 and X3 are borderlines. Since the grid is
of size $(3,3)$, we do not know which borderlines correspond to $X1$
and $X3$.  Some ambiguities can appear and thus for the same
configuration there will be many possible sequences $(X1,X2,X3)$. The
robots could be confused not knowing which action to take. To avoid
this situation, we will use the following method: First we will choose
one or two guide lines in the following manner: the line that contains
the d.biggest d.block of robots is elected as a guide line. Note that
the guide line can only contain two or three robots. In the case there
are two possible guide lines that are perpendicular to each other,
then i) in the case only one of this two guide lines is at the
borderline of the grid, then this line is the guide line.  ii) In the
other case, the guide line is elected as follow: Let $D1$ be one
possible guide line and $D2$, $D3$ be the lines that are horizontal to
$D1$. In the same manner let $D'1$ be the other possible guide line
and $D'2$, $D'3$ be the lines that are horizontal to $D'1$. Let $B$ be
the number of the biggest d.blocks on the lines $Di$ and $B'$ be the
number of the biggest d.blocks on the lines $D'i$. The guide line is
the one corresponding to the biggest value among $B$ and $B'$. For
Instance in Figure~\ref{Explanation}, the configuration can be
$(2,1,2)$ or $(2,2,1)$. We can see that $d=1$, and the size of the
biggest 1.block is equal to $2$. Note that there is an 1.block of size
$2$ on two borderlines that are perpendicular to each other (on $D3$
and $D'1$ ---refer to Figure~\ref{Explanation}). Let $B$ be the number
of 1.blocks on the lines that are horizontal to $D3$, clearly
$B=2$. In the same manner, let $B'$ be the number of 1.blocks of size
$2$ on the lines that are horizontal to $D'1$ (clearly $B'=1$). We can
see that $B>B'$, thus the guide lines are both $D3$ and $D1$ (The
lines that are considered are the ones that are horizontal to $D3$ and
$D1$). Thus the configuration is of type $(2,1,2)$.

\begin{figure}[hbp] 
 \begin{center}
 \epsfig{figure=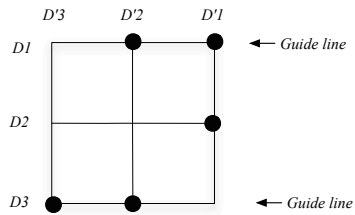,width=5cm}\\
  \caption{Guide-lines, configuration of type $(2,1,2)$}\label{Explanation}
  \end{center}
\end{figure}

\begin{figure}[hbp]
  \begin{center}
  \epsfig{figure=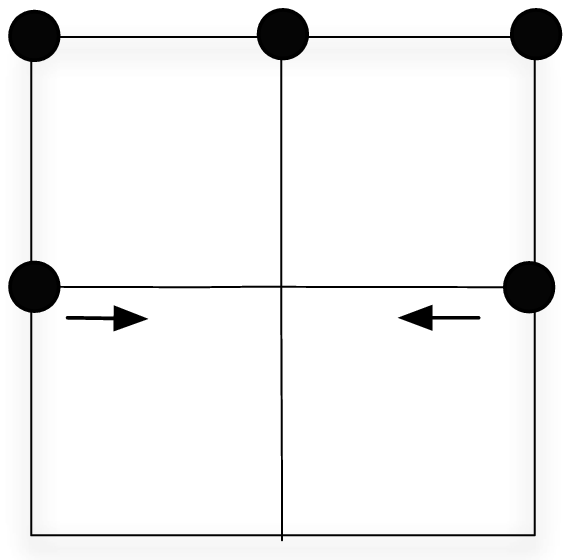,width=3cm}\\
  \caption{Instance of a configuration $(3,2,0)$}\label{3-2(1)}
  \end{center}
  \end{figure}

The triple set $(X1,X2,X3)$ refer then to the number of robots that are horizontal to the guide lines. The following cases are then possible:

\begin{itemize}
\item The configuration is of type $(1,1,3)$. Two sub-cases are possible: 
i) The configuration is similar to the one shown in Figure \ref{PerfectSym}. It is clear that in this case no guide line can be determined. The robots allowed to move are the ones that are at the corner having one free node as a neighbor, their destination is their adjacent free node on the borderline they belong to. ii) The remaining cases: One line can be elected as the guide line, this line is the one that contains an 1.block of size $3$ ($X3$). The robot that is alone on the borderline ($X1$) is the one allowed to move, its destination is its adjacent free node on the shortest path towards the middle line (the one that contains $X2$). Note that in a case of symmetry, the adversary will break the symmetry by choosing one of the two possible neighboring nodes.

\item The configuration is of type $(1,2,2)$. 
 The robot that is alone on the borderline ($X1$) is the one allowed to move, its destination is its adjacent free node on the shortest path towards the free node on the line that contains $X2$. 

\item The configuration is of type $(1,3,1)$. Two sub-cases are possible: i) The configuration is similar to the one shown in Figure \ref{EXP5-3C2}, Step 1. Note that for this configuration, there is a dedicated algorithm that solves the exploration problem. The algorithm is detailed in Figure \ref{EXP5-3C2}. Note that since the system is asynchronous, the adversary in some steps of the algorithm can activates one of the two robots that are allowed to move. In this case, the robot that was supposed to move in the first place is the only one that can move, thus by moving the configuration reached when both robots were activated is reached again ii) The remaining cases: we are sure that there is one robot that is part of an 1.block of size $3$ (in the middle line) that has two neighboring free nodes (Note that there is only five robots and a single 1.block of size $3$), let this robot be $\mathcal R1$. $\mathcal R1$ is the only one allowed to move, its destination is its adjacent free node towards the closest robot that is in one of the two borderlines that are horizontal to the 1.block of size $3$.
\item The configuration is of type ($2,1,2)$.  Note that the configuration does not contain an 1.block of size $3$. Let $D1$ and $D3$ be the two borderlines corresponding to $X1$, $X2$ respectively. The sub-cases below are possible:
\begin{itemize}
\item Both $D1$ and $D2$ contains robots at distance $2$ ($d=2$). In this case, we are sure that there is one robot on the center of the grid (on the middle of the middle line, otherwise the configuration will contains an 1.block of size $3$). This robot is the one allowed to move, its destination is one of adjacent free node towards the borderline (refer to Figure \ref{Sym212}).
\item The robots on $D1$ are at distance $1$ and the robots on $D2$ are at distance $2$. If the robot that is in the middle line (according to the guide line) is also on a borderline (see Figure \ref{2-1-2(1)}), we are sure that there is one robot at the corner of the grid not having any neighboring robot. This robot is the one allowed to move, its destination is one of its adjacent free node. If the robot is in the center of the grid (see Figure \ref{2-1-2(2)}), then this robot is the one allowed to move its destination is its adjacent free node towards $D2$.
\item Both $D1$ and $D2$ contains robots at distance $1$ ($d=1$). Let $D3$ be the middle line that is horizontal to both $D1$ and $D2$. The robot allowed to move is the one that is on $D3$, its destination is its adjacent node towards $D1$ or $D2$ (The scheduler will make the choice in the case of symmetry). 
\end{itemize} 
\item The configuration is of type $(2,3,0)$. In this case the robot that in the middle line that contains three robots having an free node as a neighbor on the line that contains two robots is the one allowed to move, its destination is this adjacent free node.
\item The configuration is of type $(3,0,2)$. In this case the robots that are in $X3$ (the line that contains two robots) are the one allowed to move, their destination is their adjacent free node on the shortest path towards $X2$.
\item The configuration is of type $(3,2,0)$ but is different from the special configuration (refer to Figure \ref{3-2(1)}). The robots allowed to move are the two robots that are on the line corresponding to $X2$. Their destination is their adjacent free node on the line that contains $X2$. Its is clear that in the case the adversary activates only one of these two robots the configuration reached will be the Special configuration (see Figure \ref{EXP5-3C1}, step 1), Thus the exploration task can be performed as shown in \ref{EXP5-3C1}. In the case the adversary activates both robots at the same time, then a tower is created and the configuration reached is like the one shown in Figure \ref{EXP5-3C1}, step 2. In this case too the exploration can be performed. 
\end{itemize} 

Note that once one of the two special configurations is built, one tower is created and the exploration task can be performed. refer to Figures \ref{EXP5-3C1} and \ref{EXP5-3C2}.

\paragraph{Correctness Proof.}

\begin{lem}\label{1-2-2}
Starting from a configuration of type $(1,2,2)$, a configuration of type $(2,3,0)$ is reached in a finite time.
\end{lem}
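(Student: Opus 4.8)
The plan is to follow the single isolated robot on the borderline $X1$ and to show that it reaches the middle line $X2$ in a finite number of moves, after which the configuration is necessarily of type $(2,3,0)$. Recall the rule governing type $(1,2,2)$: the unique robot alone on the borderline $X1$ is the only one allowed to move, and its destination is its adjacent free node on a shortest path toward the unique free node of the middle line $X2$. (The middle line carries two robots, hence exactly one free node, and that free node never changes, since the two robots on $X2$ stay put.) First I would establish that this robot's trajectory is monotone: while it is still on $X1$, its grid-distance to the target free node strictly decreases at each step, because from any node of $X1$ the only free neighbour lying on a shortest path toward that free node is either the next node of $X1$ in the direction of the target column (always free, since $X1$ carries a single robot) or, once the robot sits directly above the free node, the downward step onto $X2$ itself.

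Second, I would argue that landing on $X2$ yields exactly type $(2,3,0)$. When the lone robot reaches the free node of $X2$, the middle line becomes fully occupied by three robots forming an 1.block of size $3$, the borderline $X1$ becomes empty, and the borderline $X3$ still carries its two robots; no tower is created, since the robot always moves onto a free node. The size-$3$ block pins down the guide line unambiguously (it is the largest possible block, and it lies along the orientation of $X2$), so the configuration is read with the middle line carrying the three robots and the two borderlines carrying two and zero robots, respectively. Because the two borderlines are interchangeable by the symmetry of the grid (and by the labelling convention assigning the non-empty borderline to the first coordinate), this is precisely a configuration of type $(2,3,0)$.

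Third, termination follows from the strict decrease of a distance bounded by the grid diameter, and I would check that the CORDA asynchrony causes no trouble: since the rule activates a single robot, no tower can arise from concurrent moves, and an outdated view can only delay the lone robot, never misdirect it, because the global configuration changes solely through that robot's own moves, so each recomputed destination is again a step of the same monotone descent. The step I expect to be the main obstacle is verifying that every intermediate configuration, while the robot is still sliding along $X1$, remains classified as type $(1,2,2)$ rather than flipping, under the guide-line rule, to the perpendicular reading (which could be read as type $(2,1,2)$ or as a special configuration) and thereby triggering a different rule. This requires tracking the interdistance $d$ together with the number and placement of the biggest d.blocks across all sub-positions of the sliding robot --- including any sub-positions involving symmetry that the adversary may resolve --- and confirming either that the type label $(1,2,2)$ is invariant or that any transient reclassification still drives the system toward $(2,3,0)$.
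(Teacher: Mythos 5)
Your proposal follows essentially the same route as the paper's own proof: the lone robot on $X1$ walks step by step to the unique free node of the middle line $X2$, producing a 1.block of size $3$ there while leaving $X1$ empty, and the resulting configuration is read as type $(2,3,0)$. You are in fact more careful than the paper, whose two-sentence argument neither spells out the monotone multi-step progression along $X1$ nor the classification-invariance issue you flag in your last paragraph; that concern is legitimate, but it is left equally unaddressed in the original proof.
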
 

\begin{proof}
In a configuration of type $(1,2,2)$ the robot that is allowed to move is the one that is alone on the borderline containing $X1$, let $\mathcal R1$ be this robot, its destination is its adjacent free node towards $X2$, Since line $X2$ contains two robots, when $\mathcal R1$ joins $X2$, $X2$ will contain an 1.block of size $3$ and $X1$ will contain no robot. Thus the configuration reached is of type $(2,3,0)$ and the lemma holds.  
\end{proof}

\begin{lem}\label{1-3-1}
Starting from a configuration of type $(1,3,1)$, either a configuration of type $(2,2,1)$ or of type $(2,1,2)$ is reached in a finite time. 
\end{lem}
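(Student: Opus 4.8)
The plan is to unfold the single move prescribed by the algorithm in the (non-special) sub-case of type $(1,3,1)$ and then recompute the guide line of the resulting configuration, checking that its type is always $(2,1,2)$ or $(2,2,1)$. First I fix coordinates: writing nodes as $(r,c)$ with $r,c\in\{0,1,2\}$, and using that a square grid cannot distinguish its middle row from its middle column, I may assume without loss of generality that the unique $1$.block of size $3$ is the middle row, so that $(1,0),(1,1),(1,2)$ are occupied, one robot sits on the top borderline (row $0$) and one on the bottom borderline (row $2$). I would then dispose of the degenerate placements: if both borderline robots lie in the same column $c$, that column is itself a $1$.block of size $3$; when $c\neq 1$ this column is a borderline, so tie-breaking rule (i) of the guide-line selection elects it and the configuration is in fact of type $(1,1,3)$, whereas $c=1$ is exactly the special ``cross'' configuration treated directly by the {\tt Exploration} algorithm (sub-case (i), Figure~\ref{EXP5-3C2}). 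Hence the configurations governed by this lemma have their two borderline robots in \emph{distinct} columns, which leaves the middle row as the unique size-$3$ block (no column can then hold more than $2$ robots) and fixes the guide line unambiguously.

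Next I identify the mover. By the algorithm the enabled robot $\mathcal R1$ is the unique middle-row robot both of whose off-row neighbours are free, and I would verify that, once the two borderline robots lie in distinct columns, exactly one middle-row robot satisfies this, so $\mathcal R1$ is well defined and is the only robot allowed to move. Up to the horizontal and vertical reflections of the grid there remain only two placements to analyse: (a) the borderline robots occupy diagonally opposite corners, e.g. $(0,0)$ and $(2,2)$; and (b) one occupies a corner and the other an edge midpoint of the opposite row in an adjacent column, e.g. $(0,0)$ and $(2,1)$.

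I then carry out the move in each case and re-derive the guide line using the perpendicular tie-break. In case (a), $\mathcal R1$ is the centre $(1,1)$, equidistant from the two borderline robots, so the adversary chooses the direction; either choice sends $\mathcal R1$ to an edge midpoint, leaving both border columns carrying a size-$2$ block while only a single horizontal size-$2$ block survives, so rule (ii) elects the vertical guide lines and the type is $(2,1,2)$. In case (b), $\mathcal R1$ is the end robot opposite the cluster, strictly closer to the edge-midpoint robot, so it moves deterministically to the corner beneath it; the resulting configuration is invariant under the anti-diagonal reflection, so its horizontal reading $(1,2,2)$ and vertical reading $(2,2,1)$ coincide as the single type $(2,2,1)$. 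In both cases the target is reached in one step.

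Finally I address asynchrony. Because in every type-$(1,3,1)$ configuration exactly one robot $\mathcal R1$ is enabled and it performs a single move, no other robot can move concurrently and no stale-view anomaly can arise during the step, so the move is effectively atomic and the target configuration is reached in finite (indeed one) step, which establishes the lemma. I expect the main obstacle to be precisely this last re-classification: one must apply the guide-line rule, including its perpendicular tie-break, correctly to the post-move configuration, and make sure that the genuinely symmetric border placements---such as the ``$\vdash$''-shaped configuration with three collinear robots in a border column---are recognised as belonging to a different type rather than spuriously violating the claimed dichotomy.
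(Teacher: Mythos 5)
Your proof is correct and takes essentially the same route as the paper's: identify the unique enabled robot of the size-$3$ block (the one whose two off-line neighbours are free), perform its prescribed move, and split on whether that robot is the middle of the block (yielding type $(2,1,2)$) or an extremity (yielding type $(2,2,1)$). Your write-up is simply more explicit than the paper's --- fixing coordinates, discharging the same-column placements as type $(3,1,1)$ or the special cross, verifying uniqueness of the mover, and recomputing the guide line (with its perpendicular tie-break) after the move --- where the paper merely asserts the resulting types.
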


\begin{proof}
When the configuration is of type $(1,3,1)$, we are sure that there is one robot that is part of the 1.block of size $3$ on $X2$ that has two neighboring free nodes. This robot is the one allowed to move its destination is its adjacent free node towards the closest robot on either $X1$ or $X2$. Suppose that such a robot is the one that is in the middle of the 1.block of size $3$. Once the robot has  moved, the configuration becomes of type $(2,1,2)$ and the lemma holds. If such a robot is at the extremity of the 1.block of size $3$, then by moving, the configuration reached is of type $(2,2,1)$ and the lemma holds.
\end{proof}

\begin{lem}\label{2-1-2}
Starting from a configuration of type $(2,1,2)$, a configuration of type $(3,0,2)$ is reached in a finite time.
\end{lem}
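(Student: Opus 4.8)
The plan is to carry out a finite case analysis driven by the three situations that the algorithm distinguishes for a $(2,1,2)$ configuration of the $(3,3)$-grid, namely according to the interdistance $d$ and the position of the unique robot lying on the middle line. Throughout, I would rely on two structural facts valid in any $(2,1,2)$ configuration: each of the two borderlines $X1$ and $X3$ has exactly two of its three nodes occupied, hence a single free node; and the middle line $X2$ carries exactly one robot. Consequently, reaching $(3,0,2)$ amounts to making that single middle-line robot occupy the free node of one borderline while the middle line is emptied. Since every move prescribed in these sub-cases sends a single robot onto a free node, no tower is ever created, so the configuration stays towerless and inside the $(2,1,2)/(3,0,2)$ family; I would record this observation first, as it also guarantees that the target configuration is towerless.

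First I would dispatch the sub-cases in which the prescribed move completes a borderline in one step. When both borderlines have their robots at distance $2$ ($d=2$), the interdistance constraint forces the middle robot to the center $(1,1)$, and its move to $(0,1)$ or $(2,1)$ fills the free node of a borderline, yielding $(3,0,2)$. When exactly one borderline is at distance $2$ and the middle robot is the central one, its move toward the distance-$2$ borderline fills that borderline's free middle node, again giving $(3,0,2)$. When both borderlines are at distance $1$, a short check ruling out the middle-robot positions that would create a $1$.block of size $3$ shows the middle robot always sits in a column whose corresponding borderline node is free, hence is adjacent to the free node of a borderline; its prescribed move completes that borderline and reaches $(3,0,2)$, the adversary's choice being immaterial when both neighbours are free since either completes a borderline.

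The remaining and genuinely delicate sub-case is the mixed one where the middle robot lies on a \emph{perpendicular} borderline (Figure~\ref{2-1-2(1)}). Here the robot allowed to move is the lone corner robot with no occupied neighbour, and its destination, chosen by the adversary, need not complete a borderline. The plan is to check both adversarial destinations explicitly: one of them immediately produces a configuration with both borderlines at distance $1$, i.e.\ the distance-$1$ sub-case already handled; the other produces a five-robot configuration in which the biggest $1$.block realigns the guide line to the perpendicular orientation. I would then re-run the guide-line disambiguation of the {\tt Preparation} phase on this new configuration and verify that it is again of type $(2,1,2)$ with both borderlines at distance $1$, whence $(3,0,2)$ is reached in one further step. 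Thus every branch terminates in at most two moves.

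The main obstacle is precisely this last sub-case: the move does not keep the orientation fixed, so one cannot reason purely within a single $(X1,X2,X3)$ labelling and must re-evaluate the guide-line election after the move to confirm that the new configuration falls into an already-terminating sub-case and that no cycle arises. A secondary point to settle is the asynchrony of CORDA: since each relevant sub-case moves a single robot, between that robot's Look and Move no other robot relocates, so its view is accurate at decision time; the only residual concern is a robot carrying an outdated Move decision from an earlier configuration, which I would argue either reproduces an already-analyzed $(2,1,2)$ configuration or merely lands on a free node without creating a tower, so convergence to $(3,0,2)$ is preserved in finite time.
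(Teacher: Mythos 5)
Your proof is correct and follows essentially the same route as the paper's: a case analysis over the algorithm's $(2,1,2)$ sub-cases, where the $d=2$, mixed-center, and $d=1$ situations reach $(3,0,2)$ in a single move, and the mixed sub-case with the middle-line robot on a perpendicular borderline reduces after one move of the isolated corner robot to the $d=1$ case. If anything you are more thorough than the paper, whose proof of this lemma omits the mixed-center sub-case (Figure~\ref{2-1-2(2)}) and simply asserts, without re-running the guide-line election for both adversarial destinations of the corner robot, that the resulting configuration is again of type $(2,1,2)$ with both borderlines at distance $1$.
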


\begin{proof}
The cases below are possible:
\begin{enumerate}
\item Both $D1$ and $D2$ contains robots at distance $2$ ($d=2$). It is clear that in this case there is one robot that is in the center of the grid. This robot is the one allowed to move, its destination is one of its adjacent free node. By moving, the robot join a borderline. Note that this borderline contains an 1.block of size $3$. Thus the configuration reached will be $(3,0,2)$.
\item The robots on $D1$ are at distance $1$ and the robots on $D2$ are at distance $2$. In this case the robot that is on the borderline on $D2$, being at the corner of the grid and not having any neighboring robot is the one that moves towards one of its adjacent free node. Note that once the robot has moved, the configuration reached remains of type $(2,1,2)$, however, both $D1$ and $D2$ contains robots at distance $1$.

\item Both $D1$ and $D2$ contains robots at distance $1$ ($d=1$). Let $D3$ be the middle line that is horizontal to both $D1$ and $D2$. In this case the robot that is on $D3$ is the one allowed to move, its destination is its adjacent free node towards one of the two neighboring borderlines that contain an 1.block of size $2$. Note that we are sure that this robot has at least one free node as a neighbor otherwise the configuration contains a single 1.block of size $3$ and the configuration will not be of type $(2,1,2)$. Once the robot has moved, a new 1.block of size $3$ is created at one borderline and the configuration will be of type $(3,0,2)$.  
\end{enumerate}

From the cases above, we can deduce that starting from a configuration of type $(2,1,2)$, a configuration of type $(3,0,2)$ is reached in a finite time and the lemma holds.
\end{proof}

\begin{lem}\label{2-3-0}
Starting from a configuration of type $(2,3,0)$, a configuration of type $(3,2,0)$ is reached in a finite time.
\end{lem}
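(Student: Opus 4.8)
The plan is to follow the single deterministic move prescribed by the algorithm in type $(2,3,0)$ and check, by a robot count and a re-election of the guide line, that it lands in type $(3,2,0)$. First I would unpack the geometry: in the $(3,3)$-Grid a configuration of type $(2,3,0)$ has its middle line $X2$ fully occupied (three robots, hence a $1$.block of size $3$), its borderline $X1$ carrying exactly two robots (so a single free node), and its borderline $X3$ empty. Since the full middle line is the unique biggest $1$.block, the guide line is correctly identified as $X2$ and the labelling $(2,3,0)$ is unambiguous.

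Next I would pin down the enabled robot and argue that the move needs no adversarial or symmetry choice. By the rule for type $(2,3,0)$, the robot allowed to move is the robot of $X2$ having the free node of $X1$ as a neighbor; call it $\mathcal R1$ and let its target be that free node. Because $X1$ has exactly one free node, both $\mathcal R1$ and its destination are uniquely determined --- the two other robots of $X2$ face occupied nodes of $X1$ and so are not enabled. Hence exactly one robot is enabled and its move is fully deterministic, so no symmetry is broken and no outdated-view interleaving arises; by fairness the adversary must eventually activate $\mathcal R1$, and the transition completes in a single step.

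Finally I would do the bookkeeping. After $\mathcal R1$ moves into the free node of $X1$, the borderline $X1$ becomes full (a $1$.block of size $3$), $X2$ drops to two robots, and $X3$ stays empty. The new full line is a borderline and is again the unique biggest $1$.block, so it is re-elected as the guide line; up to the relabelling of the borderlines, this is exactly a configuration of type $(3,2,0)$, which proves the lemma. I expect the only real obstacle to be this routine but necessary check that the move is unambiguous and that the guide line is re-elected correctly after the move (so that the reached configuration is genuinely classified as $(3,2,0)$ rather than confused with some other relabelling); once that is settled, the per-line robot counting is immediate.
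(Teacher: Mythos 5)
Your proof is correct and follows essentially the same route as the paper's: apply the unique move prescribed for type $(2,3,0)$ (the robot of the full middle line adjacent to the free node of the two-robot borderline steps onto that node) and observe that the resulting configuration, with a full borderline, two robots on the middle line, and an empty borderline, is classified as $(3,2,0)$. Your additional checks --- uniqueness of the enabled robot and re-election of the guide line after the move --- are implicit in the paper's argument, so this is the same proof carried out with a bit more care.
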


\begin{proof}
When the configuration is of type $(2,3,0)$, the robot allowed to move is the one that is on the line that contains $X2$ having an free node as a neighbor on the line that contains two robots. Note that once the robot has moved, a new 1.block of size $3$ is created one borderline of the grid. Thus the configuration reached will be of type $(3,2,0)$ and the lemma holds. 
\end{proof}

\begin{lem}\label{3-0-2}
Starting from a configuration of type $(3,0,2)$, either a configuration of type $(3,2,0)$ or of type $(3,1,1)$ is reached in a finite time.
\end{lem}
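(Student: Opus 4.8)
The plan is to read off the behaviour that the algorithm prescribes in a $(3,0,2)$ configuration and to check that it deterministically drops one or both of the two ``free'' robots onto the middle line. Recall that in a configuration of type $(3,0,2)$ the guide line is $X1$, unambiguously identified as the unique borderline carrying the $1$.block of size $3$; the opposite borderline $X3$ carries exactly two robots and the middle line $X2$ is empty. The algorithm allows only the two robots of $X3$ to move, each toward its adjacent free node on a shortest path to $X2$. First I would observe that this destination is unique and collision-free: every node of the borderline $X3$ has exactly one inward neighbour, which lies on the empty line $X2$, so each of the two robots has a single well-defined target, and since they start on distinct nodes of $X3$ they land on distinct nodes of $X2$. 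In particular no tower is ever created here and no adversarial choice of direction is involved.

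Next I would split on how the adversary schedules the two moving robots, which is the only source of branching. If the adversary activates both robots within a single step (a synchronous move, which is legal in CORDA), both leave $X3$ and join $X2$; the borderline $X1$ still carries the size-$3$ block, hence remains the unique guide line, while $X2$ now carries two robots and $X3$ is empty, so the configuration reached is exactly of type $(3,2,0)$. If instead the adversary lets a single robot complete its move first, that robot joins $X2$ while the other stays on $X3$; again $X1$ is unchanged and remains the guide line, so the resulting configuration has one robot on $X2$ and one on $X3$, i.e.\ it is of type $(3,1,1)$. Either way, the first configuration change starting from $(3,0,2)$ already produces a configuration of one of the two announced types. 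Finiteness then follows from fairness: every robot performs its Look-Compute-Move cycle infinitely often, so within finite time one of the two robots on $X3$ completes a move, and by the case analysis this move yields a $(3,2,0)$ or a $(3,1,1)$ configuration.

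The point that needs the most care --- and the only place where CORDA asynchrony could seem threatening --- is the possibility that a robot performs its Look in $(3,0,2)$ but Moves only after the configuration has already changed, thus acting on an outdated view. I would argue that this cannot invalidate the conclusion, because the lemma only asks that a target configuration be \emph{reached}, not that it persist: the instant the first pending move completes we are already in $(3,1,1)$ (or in $(3,2,0)$ if the two moves coincide), which discharges the statement. The subsequent fate of a second, outdated-view robot is irrelevant to this lemma: its computed target is still an inward node of $X2$, so at worst it carries the system from $(3,1,1)$ on to $(3,2,0)$, a transition absorbed by the next lemma of the convergence chain. Hence no schedule can avoid entering one of the two types, and the lemma follows.
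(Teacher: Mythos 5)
Your proof is correct and takes essentially the same approach as the paper's: the identical case split on whether the adversary activates both robots of $X3$ simultaneously (yielding $(3,2,0)$) or only one of them (yielding $(3,1,1)$). Your added observations --- that each robot's destination is the unique inward neighbour on $X2$ so no tower can form, and that outdated views in CORDA are harmless because the lemma only requires one of the two types to be \emph{reached} --- are sound refinements of the paper's terser argument rather than a different route.
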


\begin{proof}
When the configuration is of type $(3,0,2)$, the robots that are on the line that $X3$ are the one allowed to move. When they do, they move to their adjacent free node towards the line that is horizontal to the the one containing an 1.block of size $3$. Note that in the case the adversary activates both robots allowed to move at the same time, then the configuration reached is of type $(3,2,0)$ and the lemma holds. If it is not the case, the configuration reached is of type $(3,1,1)$ and the lemma holds.
\end{proof}

\begin{lem}\label{3-1-1}
Starting from a configuration of type $(3,1,1)$, either a configuration of type $(3,2,0)$ or of type $(2,2,1)$ is reached in a finite time.
\end{lem}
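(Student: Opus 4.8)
The plan is to follow the same template as Lemmas~\ref{1-2-2}--\ref{3-0-2}: recall the move prescribed by the algorithm in a configuration of type $(3,1,1)$, describe the configuration obtained, and then discharge the asynchrony (the adversary may activate one or several of the robots allowed to move, and a robot may act on an outdated view). Recall that a $(3,1,1)$ configuration carries a full block of three robots on one borderline (the candidate guide line), one robot on the middle line, and one robot on the opposite borderline. Exactly as in the $(1,1,3)$ case of the algorithm description, two subcases arise: the perfectly symmetric configuration of Figure~\ref{PerfectSym}, in which no guide line can be elected, and the general case, in which the full line is the unique biggest block and hence the guide line.

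First I would treat the general subcase. Here the unique robot lying on the borderline opposite to the guide line is the only one allowed to move, and it heads for its free neighbour on a shortest path to the middle line. I would check that this neighbour is indeed free, so that the move is well defined and no tower is created: the full block stays put, the middle line then carries two robots, and the opposite borderline becomes empty. Re-electing the guide line (still the untouched block of three) shows the configuration obtained is of type $(3,2,0)$, as required. Since a single robot moves, this subcase reaches the target in one step.

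Next I would treat the symmetric subcase (Figure~\ref{PerfectSym}), which is the real work. No guide line exists, and the algorithm lets the two corner robots that have exactly one free neighbour slide along their borderline. I would argue by cases on the adversary's schedule. If only one of them is activated, the symmetry is broken and the configuration obtained is a general $(3,1,1)$, so by the previous paragraph a $(3,2,0)$ configuration is eventually reached; I would also cover the case where the second robot subsequently moves on an outdated view, checking that it still lands us in one of the two target types. If instead both robots move (essentially) synchronously, I would compute the resulting placement and verify, after re-electing the guide line, that it is of type $(2,2,1)$. Collecting the cases yields $(3,2,0)$ or $(2,2,1)$ in finite time, since every branch terminates within two moves without revisiting a configuration class.

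The main obstacle I anticipate is this symmetric subcase together with the asynchronous bookkeeping. I must ensure that the move prescribed to the lone robot always targets a \emph{free} node --- the only danger being a configuration in which the middle-line robot and the opposite-borderline robot share the central column, creating a second, vertical block of three --- and that no schedule of stale moves can escape the two announced target types or make the system cycle. Establishing that every admissible interleaving collapses to $(3,2,0)$ or $(2,2,1)$, and pinning down for each schedule which of the two is forced, is the delicate part; the remainder is a routine configuration computation of the kind carried out in the preceding lemmas.
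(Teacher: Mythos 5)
Your proposal follows essentially the same route as the paper's proof: the same split into the perfectly symmetric configuration of Figure~\ref{PerfectSym} versus the general $(3,1,1)$ case, with simultaneous activation of the two corner robots yielding $(2,2,1)$, activation of only one yielding a general (asymmetric) $(3,1,1)$, and the general case resolved by the lone borderline robot moving towards the middle line to produce $(3,2,0)$. If anything, you are more careful than the paper, which silently ignores both the outdated-view interleavings you propose to discharge and your ``dangerous'' subcase where the centre node is occupied (there the paper's claim that a single move immediately yields $(3,2,0)$ is inaccurate, since the lone robot needs two moves); the extra checks you anticipate are precisely the points the published argument glosses over.
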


\begin{proof}
In the case the configuration is similar to the one shown in Figure \ref{PerfectSym}. The robots that are at the corner having an free node as a neighbor are the one allowed to move. Their destination is their adjacent free node. Note that in the case the adversary activates both robots at the same time, the configuration reached is of type $(2,2,1)$ and the lemma holds. In the case the adversary activates only one robot, then the configuration reached remains of type $(3,1,1)$ but it is different from the Figure \ref{PerfectSym}. For the other configurations of type $(3,1,1)$ (all the configurations that are different from the one shown in Figure \ref{PerfectSym}). The robot that is allowed to move is the one that is single on the borderline that contains $X3$. Its destination is its adjacent free node on the shortest path towards the line that contains $X2$. Note that once it has moved, the configuration reached is of type $(3,2,0)$ and the lemma holds.
\end{proof}

\begin{lem}\label{1-2-2bis}
Starting from a configuration of type $(1,2,2)$, a configuration of type $(3,2,0)$ is reached in a finite time.
\end{lem}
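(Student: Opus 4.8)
The plan is to chain the two transition lemmas already established for the intermediate configuration types, since the target $(3,2,0)$ is reachable from $(1,2,2)$ through the single waypoint $(2,3,0)$. First I would invoke Lemma~\ref{1-2-2}, which guarantees that, starting from a configuration of type $(1,2,2)$, a configuration of type $(2,3,0)$ is reached in finite time. Then I would invoke Lemma~\ref{2-3-0}, which guarantees that, starting from a configuration of type $(2,3,0)$, a configuration of type $(3,2,0)$ is reached in finite time. Composing these two finite-time reachability statements---the sum of two finite durations being itself finite---yields exactly the conclusion to be proved.

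Concretely, the only point needing care is that the intermediate configuration produced by Lemma~\ref{1-2-2} genuinely matches the hypothesis of Lemma~\ref{2-3-0}. The output of Lemma~\ref{1-2-2} is obtained when the single isolated robot on the borderline $X1$ joins the line $X2$: this empties $X1$ and creates a 1.block of size $3$ on $X2$, so the resulting configuration indeed has three robots on the middle line and two on a borderline, which is precisely a $(2,3,0)$ configuration. Thus the junction between the two lemmas is clean, and no degenerate or special sub-case can slip through.

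There is essentially no genuine obstacle here: the argument reduces to transitivity of finite-time reachability, and the whole content is the verification that the configuration types align at the waypoint. I would therefore state the proof as a two-line composition of Lemma~\ref{1-2-2} followed by Lemma~\ref{2-3-0}, noting explicitly that finite-time followed by finite-time remains finite-time, and that the $(2,3,0)$ configuration emitted by the first lemma is exactly the one consumed by the second.
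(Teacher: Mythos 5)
Your proof is correct and is essentially identical to the paper's own argument: the paper likewise chains Lemma~\ref{1-2-2} (from $(1,2,2)$ to $(2,3,0)$) with Lemma~\ref{2-3-0} (from $(2,3,0)$ to $(3,2,0)$) and concludes by transitivity of finite-time reachability. Your extra check that the configuration output by the first lemma matches the hypothesis of the second is a welcome clarification but does not change the substance.
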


\begin{proof}
From Lemma \ref{1-2-2}, we are sure that starting from a configuration of type $(1,2,2)$, a configuration of type $(2,3,0)$ is reached in a finite time. From Lemma \ref{2-3-0} we are sure that starting from a configuration of type $(2,3,0)$, a configuration of type $(3,2,0)$ is reached in a finite time. Thus we can deduce that starting from a configuration of type $(1,2,2)$, a configuration of type $(3,2,0)$ is reached in a finite time and the lemma holds.
\end{proof}

\begin{lem}\label{1-3-1bis}
Starting from a configuration of type $(1,3,1)$, a configuration of type $(3,2,0)$ is reached in a finite time.
\end{lem}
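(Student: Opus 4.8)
The plan is to prove this lemma entirely by composing the transition lemmas already established for the individual configuration types, treating the argument as a walk through a finite ``transition graph'' whose vertices are configuration types and whose edges are the finite-time transitions guaranteed by Lemmas~\ref{1-2-2}--\ref{3-1-1} and~\ref{1-2-2bis}. The first step is to invoke Lemma~\ref{1-3-1}, which states that from a configuration of type $(1,3,1)$ the system reaches, in finite time, either a configuration of type $(2,2,1)$ or one of type $(2,1,2)$. This immediately reduces the proof to these two cases.

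The key observation for the first case is the borderline symmetry: since $X1$ and $X3$ both count robots on borderlines and reading the grid in the reverse order sends $(X1,X2,X3)$ to $(X3,X2,X1)$ via a grid automorphism, a configuration of type $(2,2,1)$ is indistinguishable from one of type $(1,2,2)$. Hence Lemma~\ref{1-2-2bis} applies directly and yields a configuration of type $(3,2,0)$ in finite time, closing the first case.

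For the second case I would chain the remaining lemmas. From $(2,1,2)$, Lemma~\ref{2-1-2} produces a configuration of type $(3,0,2)$ in finite time. From there, Lemma~\ref{3-0-2} yields either $(3,2,0)$ --- in which case we are done --- or $(3,1,1)$. In the latter subcase, Lemma~\ref{3-1-1} yields either $(3,2,0)$ --- done again --- or $(2,2,1)$, which by the same borderline symmetry is of type $(1,2,2)$, so a final application of Lemma~\ref{1-2-2bis} produces $(3,2,0)$.

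It then remains only to certify finiteness, and this is where the single subtlety lies. Each invoked lemma already guarantees a finite-time transition, so the only way the overall bound could fail is through an infinite loop among configuration types. Collecting the edges actually used --- $(1,3,1)\to\{(2,2,1),(2,1,2)\}$, $(2,1,2)\to(3,0,2)$, $(3,0,2)\to\{(3,2,0),(3,1,1)\}$, $(3,1,1)\to\{(3,2,0),(2,2,1)\}$, and $(2,2,1)\equiv(1,2,2)\to(3,2,0)$ --- one checks that the induced directed graph is acyclic with unique sink $(3,2,0)$. Thus every path has bounded length and ends at $(3,2,0)$, which gives the claimed finite-time guarantee. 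The main obstacle is therefore not any delicate geometric argument but precisely this bookkeeping: correctly identifying $(2,2,1)$ with $(1,2,2)$ and verifying that the resulting transition graph makes strict progress toward $(3,2,0)$ without cycling.
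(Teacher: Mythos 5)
Your proof is correct, and it follows the same basic strategy as the paper: compose the already-established per-type transition lemmas into a chain ending at $(3,2,0)$. The difference is one of completeness, and it is in your favor. The paper's own proof of this lemma is two sentences long: it invokes Lemma~\ref{1-3-1} but then treats only the $(2,2,1)$ outcome (silently identifying it with $(1,2,2)$ so that Lemma~\ref{1-2-2bis} applies), and it never mentions the other branch, in which Lemma~\ref{1-3-1} delivers a configuration of type $(2,1,2)$. That branch is in effect handled elsewhere in the paper, by Lemma~\ref{2-1-2bis} --- which, however, is stated and proved \emph{after} this lemma, and whose own proof likewise glosses over the $(3,1,1)$ outcome of Lemma~\ref{3-0-2}. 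Your proof closes both of these gaps explicitly: you chase the $(2,1,2)$ branch through Lemmas~\ref{2-1-2}, \ref{3-0-2}, and \ref{3-1-1}, you justify the identification of $(2,2,1)$ with $(1,2,2)$ by the grid automorphism swapping the two borderlines (a point the paper leaves implicit), and you verify that the resulting transition graph is acyclic with unique sink $(3,2,0)$, which is exactly what makes the finite-time claim legitimate when branches can feed back into $(2,2,1)$. So the two arguments use the same building blocks, but yours is the self-contained one; the paper's reads as a shortcut that only works because the missing cases are patched by later lemmas.
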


\begin{proof}
From Lemma \ref{1-3-1}, we are sure that starting from a configuration of type $(1,3,1)$, a configuration of type $(2,2,1)$ is reached in a finite time. From Lemma \ref{1-2-2bis} we are sure that starting from a configuration of type $(1,2,2)$, a configuration of type $(3,2,0)$ is reached in a finite time. Thus we can deduce that starting from a configuration of type $(1,3,1)$, a configuration of type $(3,2,0)$ is reached in a finite time and the lemma holds.
\end{proof}

\begin{lem}\label{2-1-2bis}
Starting from a configuration of type $(2,1,2)$, a configuration of type $(3,2,0)$ is reached in a finite time.
\end{lem}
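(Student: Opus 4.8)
The plan is to avoid re-examining individual robot moves and instead argue purely at the level of symbolic configuration types, chaining the transition lemmas already proved. First I would apply Lemma~\ref{2-1-2}, which guarantees that from any configuration of type $(2,1,2)$ a configuration of type $(3,0,2)$ is reached in finite time. To this I would then apply Lemma~\ref{3-0-2}: from $(3,0,2)$ the protocol reaches either $(3,2,0)$, in which case the claim holds immediately, or $(3,1,1)$.

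In the remaining branch I would invoke Lemma~\ref{3-1-1} on the configuration of type $(3,1,1)$, which yields either $(3,2,0)$, again finishing the argument, or a configuration of type $(2,2,1)$. The key identification, used exactly as in the proof of Lemma~\ref{1-3-1bis}, is that $(2,2,1)$ is the same configuration type as $(1,2,2)$: since $X1$ and $X3$ both denote borderlines and are interchangeable in the encoding $(X1,X2,X3)$, the robots' views cannot distinguish the two labelings. Hence I may apply Lemma~\ref{1-2-2bis}, which gives that $(1,2,2)$ reaches $(3,2,0)$ in finite time, closing the last branch.

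The main obstacle, and the step I would scrutinize, is ensuring that this chaining genuinely terminates rather than looping: the transitions induce a directed graph on types, $(2,1,2)\to(3,0,2)$, $(3,0,2)\to\{(3,2,0),(3,1,1)\}$, $(3,1,1)\to\{(3,2,0),(2,2,1)\}$, and $(2,2,1)\equiv(1,2,2)\to(3,2,0)$, and I would check that no branch re-enters an earlier type, so that the sink $(3,2,0)$ is attained after a bounded number of phase transitions. A secondary subtlety is the asynchronous adversary and outdated views, but each cited lemma already establishes its transition \emph{in finite time} under CORDA, so no additional asynchrony analysis is needed beyond confirming the symmetry identification $(2,2,1)\equiv(1,2,2)$ is respected by the protocol.
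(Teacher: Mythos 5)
Your proposal is correct and follows the same general strategy as the paper (chaining the type-transition lemmas), but it is actually \emph{more complete} than the paper's own proof. The paper's proof of this lemma is a two-step chain: it invokes Lemma~\ref{2-1-2} to get from $(2,1,2)$ to $(3,0,2)$, and then cites Lemma~\ref{3-0-2} as if that lemma guaranteed reaching $(3,2,0)$ directly --- silently dropping the alternative outcome $(3,1,1)$ that Lemma~\ref{3-0-2} explicitly allows. Your proof repairs exactly this omission: you follow the $(3,1,1)$ branch through Lemma~\ref{3-1-1}, and in its $(2,2,1)$ sub-branch you use the identification $(2,2,1)\equiv(1,2,2)$ (the reading of the triple $(X1,X2,X3)$ from the other side, an identification the paper itself uses implicitly in the proof of Lemma~\ref{1-3-1bis}) to invoke Lemma~\ref{1-2-2bis}. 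Your explicit check that the induced graph on types, $(2,1,2)\to(3,0,2)\to(3,1,1)\to(2,2,1)\equiv(1,2,2)\to(2,3,0)\to(3,2,0)$, is acyclic is also a worthwhile addition, since ``finite time'' for the composite argument genuinely depends on no type being revisited; the paper never states this. In short: same decomposition, but your version closes a case analysis the paper leaves open, and would be the preferable proof to print.
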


\begin{proof}
From Lemma \ref{2-1-2}, we are sure that starting from a configuration of type $(2,1,2)$, a configuration of type $(3,0,2)$ is reached in a finite time. From Lemma \ref{3-0-2}, we are sure that starting from a configuration of type $(3,0,2)$, a configuration of type $(3,2,0)$ is reached in a finite time. Thus we can deduce that starting from a configuration of type $(2,1,2)$, a configuration of type $(3,2,0)$ is reached in a finite time and the lemma holds.
\end{proof}

\begin{lem}
Starting from any configuration that is towerless, a configuration of type $(3,2,0)$ is reached in a finite time.
\end{lem}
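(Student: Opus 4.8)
The plan is to read this lemma as the capstone that closes the reduction chain built by the preceding lemmas, via a finite and exhaustive case analysis over the configuration types and a dispatch of each type to an already-proven statement. First I would note that a towerless placement of the five robots on the $(3,3)$-Grid occupies five distinct nodes, so the guide-line convention described above (illustrated by Figure~\ref{Explanation}) assigns to it a well-defined triple $(X1,X2,X3)$ with $X1+X2+X3=5$ and each $Xi\le 3$, where $X2$ is the middle line and $X1,X3$ are the two (symmetric) borderlines. Since the only partitions of $5$ into three parts each at most $3$ are $\{3,2,0\}$, $\{3,1,1\}$ and $\{2,2,1\}$, and since the borderlines $X1$ and $X3$ play interchangeable roles, the canonical types are exactly $(3,2,0)$, $(3,0,2)$, $(2,3,0)$, $(3,1,1)$ (equivalently $(1,1,3)$), $(1,3,1)$, $(2,1,2)$, and $(1,2,2)$ (equivalently $(2,2,1)$). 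Establishing that this enumeration is complete and unambiguous is the backbone of the argument.

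Next I would dispatch each type to its lemma. If the configuration is already of type $(3,2,0)$ there is nothing to do. Type $(1,2,2)$ (and its mirror $(2,2,1)$) reaches $(3,2,0)$ by Lemma~\ref{1-2-2bis}; type $(2,1,2)$ by Lemma~\ref{2-1-2bis}; type $(2,3,0)$ in a single move by Lemma~\ref{2-3-0}. For type $(1,3,1)$ I would use Lemma~\ref{1-3-1} to reach $(2,2,1)$ or $(2,1,2)$, and then Lemmas~\ref{1-2-2bis} and~\ref{2-1-2bis} respectively (this is precisely the content of Lemma~\ref{1-3-1bis}). The two remaining types are obtained by composing single-step lemmas: Lemma~\ref{3-0-2} sends $(3,0,2)$ to $(3,2,0)$ or $(3,1,1)$, and Lemma~\ref{3-1-1} sends $(3,1,1)$ to $(3,2,0)$ or $(2,2,1)$, the latter folding back into the $(1,2,2)$ case. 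I would also remark that the special exploration-start configurations excluded from the Preparation phase, namely the special $(3,2,0)$ of Figure~\ref{EXP5-3C1} and the special $(1,3,1)$ of Figure~\ref{EXP5-3C2}, are treated directly by the Exploration phase and so need not be routed through $(3,2,0)$.

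Two points demand care, and the second is the real obstacle. The first is exhaustiveness of the classification: I must check that the guide-line selection assigns one of the enumerated types to every towerless configuration without ambiguity, including the perfectly symmetric instances (such as Figure~\ref{PerfectSym} for $(3,1,1)$) where no single guide line stands out; there the symmetric sub-rule must be invoked and its adversarial tie-breaking shown still to land in a declared type. The second is termination of the composed reductions: because several links map one non-target type to another (notably $(3,0,2)\to(3,1,1)\to(2,2,1)\to(2,3,0)\to(3,2,0)$) and some sub-lemmas contain internal self-transitions among instances of the same type, I must argue that no adversarial schedule can loop forever. The clean way to do this is to observe that each cited link is itself a \emph{finite-time} statement, and that the induced transition relation on types forms an acyclic chain directed toward $(3,2,0)$; composing finitely many finite-time reductions along an acyclic chain then yields finite-time convergence, which is exactly the claim.
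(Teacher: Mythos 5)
Your proof takes essentially the same approach as the paper's: the paper's own argument is a one-line deduction from Lemmas~\ref{2-3-0}--\ref{2-1-2bis}, i.e., precisely the type-by-type dispatch to the preceding reduction lemmas that you carry out. Your additions --- the partition argument showing the seven types are exhaustive, and the explicit check that the induced type-transition relation is acyclic and directed toward $(3,2,0)$ --- merely make rigorous what the paper leaves implicit, and they are correct.
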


\begin{proof}
From Lemmas \ref{2-3-0}-\ref{2-1-2bis}, we can deduce that starting from any configuration that is towerless, a configuration of type $(3,2,0)$ is reached in a finite time and the lemma holds.
\end{proof}

\begin{lem}\label{final}
Starting from one of the three special configurations, all the nodes of the grid are explored and the algorithm stops.
\end{lem}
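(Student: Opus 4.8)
The plan is to prove the lemma by a direct, exhaustive case analysis, tracing the execution from each of the three special configurations through the moves prescribed in Figures~\ref{EXP5-3C1} and~\ref{EXP5-3C2}. I would treat separately the configuration of Figure~\ref{EXP5-3C1}-Case~$(1)$ and the two configurations of Figure~\ref{EXP5-3C2}-Cases~$(1a)$ and~$(1b)$, since each leads, after the first tower-creating step, to its own instance of Case~$2$, and from then on follows a different prescribed sequence.

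First, for each special configuration I would confirm that the first prescribed move creates exactly one tower and yields the corresponding Case~$2$ configuration. This reduces the number of occupied nodes to four while keeping the configuration distinguishable both from every towerless configuration and from the configurations of the {\tt Preparation} phase; the tower thus acts as a landmark that orients the grid and marks the current stage of the exploration. Next, I would follow, step by step, the sequence of moves depicted in the relevant figure, and at each step I must (i) identify the unique robot (or symmetric pair of robots) authorized to move and its destination, (ii) record which node is newly visited, and (iii) verify that the reached configuration is distinguishable from every previously encountered one, so that the robots can unambiguously determine the current stage. Since the grid has only nine nodes, this finite enumeration exhausts all nodes: I would keep a running count of visited nodes and check that every node of the $(3,3)$-Grid is visited before the last move.

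The main obstacle is asynchrony under the CORDA adversary. Whenever a step authorizes two symmetric robots to move, the adversary may activate only one of them, or may let a robot act on an outdated view after the other has already moved. I must verify that each such branch either reproduces the intended target configuration or produces a configuration that is still correctly handled by the algorithm and eventually rejoins the intended sequence. I would handle this exactly as in the {\tt Set-Up} phase (\emph{cf.} Lemmas~\ref{Leader} and~\ref{Choice}): show that a partial activation yields a distinguishable intermediate configuration in which the remaining robot is the unique one then allowed to move, so that the two branches reconverge, and that an outdated-view move only leads to a previously treated configuration rather than an unhandled one.

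Finally, I would check that the last configuration of each of the three sequences is terminal, \emph{i.e.} no robot is authorized to move in it. Together with the coverage argument, this establishes that, starting from any of the three special configurations and regardless of the adversary's scheduling choices, the execution visits every node of the grid and then halts, which is precisely the statement of the lemma.
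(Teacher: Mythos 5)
Your proposal is correct and takes essentially the same approach as the paper: the paper's entire proof is a one-line appeal to the move sequences depicted in Figure~\ref{EXP5-3C1} (the sequences of Figure~\ref{EXP5-3C2} being implicitly included), i.e., verification by tracing the prescribed exploration steps on the nine-node grid. Your plan is simply a more explicit and careful rendering of that same figure-tracing argument, spelling out the tower-creation, distinguishability, asynchrony, and termination checks that the paper leaves implicit.
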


\begin{proof}
 It is easy to see from Figure \ref{EXP5-3C1}, that all the nodes of the grid are explored. Thus the lemma holds.
\end{proof}

\begin{lem}
Starting from any configuration of type $(3,2,0)$, the exploration can be performed.
\end{lem}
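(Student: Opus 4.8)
The plan is to reduce the statement to Lemma~\ref{final}, which already guarantees that the exploration terminates with every node visited once one of the three special configurations is reached. First I would observe that a configuration of type $(3,2,0)$ is either the special configuration depicted in Figure~\ref{EXP5-3C1} (step~1) or a non-special one, such as the instance of Figure~\ref{3-2(1)}. In the special case there is nothing left to prove: the configuration is precisely one of the three configurations from which the {\tt Exploration} phase is designed to start, so Lemma~\ref{final} applies directly and the exploration is performed and terminates with all nodes visited.

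The substantive work concerns the non-special $(3,2,0)$ configurations. Here the two robots lying on the middle line $X2$ are the only ones allowed to move, each toward its adjacent free node of $X2$ (that free node being the center of $X2$, since $X1$ is a full line and the two $X2$-robots sit at the two extremities). I would then do a case analysis on the adversary's activation. If the adversary activates a single robot, that robot reaches the center of $X2$ while $X1$ stays full, so the resulting configuration is exactly the special $(3,2,0)$ configuration of Figure~\ref{EXP5-3C1} (step~1), which is settled by the previous paragraph. If the adversary activates both robots, they move to the same center node, a tower is created there, and the configuration reached is the one of Figure~\ref{EXP5-3C1} (step~2), i.e.\ a configuration already lying on the exploration track; from there the moves depicted in Figure~\ref{EXP5-3C1} complete the visit of every remaining node, exactly as in the tail of the execution underlying Lemma~\ref{final}.

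Because the model is CORDA, I would also discharge the outdated-view subtlety: a robot may have computed its move from the non-special view and only execute it after the configuration has already become special (after the other $X2$-robot moved). Since its prescribed destination is the center of $X2$, and the rule attached to the special configuration of Figure~\ref{EXP5-3C1} (step~1) sends the unique border robot to that very same center, the stale move and a fresh move coincide; in either reading we land on the central-tower configuration (step~2). Hence, whatever the adversary's interleaving, the only reachable successors of a non-special $(3,2,0)$ configuration are the special $(3,2,0)$ configuration (step~1) and the central-tower configuration (step~2), and both are genuinely on the exploration track. The main obstacle is precisely this asynchronous bookkeeping --- certifying that no successor of a different configuration type can arise and that both admissible successors feed into Figure~\ref{EXP5-3C1}. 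Once that is established, the conclusion follows by Lemma~\ref{final} in the first sub-case and by the explicit moves of Figure~\ref{EXP5-3C1} (which already justify Lemma~\ref{final}) in the second.
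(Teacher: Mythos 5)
Your proof is correct and takes essentially the same route as the paper's: split into the special configuration (settled by Lemma~\ref{final}) versus the non-special one of Figure~\ref{3-2(1)}, then case-analyze whether the adversary activates one of the two $X2$-robots (yielding the special configuration) or both (creating the central tower of Figure~\ref{EXP5-3C1}, step~2). Your extra paragraph discharging the outdated-view subtlety in CORDA is a refinement the paper's own proof leaves implicit, not a different approach.
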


\begin{proof}
If the configuration is the special configuration (refer to Figure \ref{EXP5-3C1} (step 1)), then according to Lemma \ref{final}, the exploration task is performed and all the nodes of the ring are explored. If the configuration is as the one show in Figure \ref{3-2(1)}, then the two robots that are not part of the 1.block of size $3$ are the one allowed to move, their destination is their adjacent node in the center of the grid. In the case where the adversary activates only one of the two robots allowed to move, the special configuration is reached and the lemma holds. If both robots are activated then a tower is created in the center of the grid and the configuration reached will be as the one shown in Figure \ref{EXP5-3C1} (Step2) and in this case too the exploration is performed and the lemma holds.
\end{proof}

From the lemmas above we can deduce that:

\bigskip

\noindent {\bf Theorem \ref{theo5}}\ {\em The deterministic
  exploration of a $(3,3)$-Grid can be solved in CORDA using 5
  oblivious robots.}

\end{document}